\newcommand\version{January 10, 2022}
\newtheorem{theorem}{Theorem}
\newtheorem{proposition}[theorem]{Proposition}
\newtheorem{lemma}[theorem]{Lemma}
\theoremstyle{definition}
\theoremstyle{remark}
\newtheorem{remark}[theorem]{Remark}
\newcommand{\1}{\mathbbm{1}}
\newcommand{\C}{\mathbb{C}}
\newcommand{\const}{\mathrm{const}\ }
\renewcommand{\epsilon}{\varepsilon}
\newcommand{\loc}{{\rm loc}}
\renewcommand{\phi}{\varphi}
\newcommand{\R}{\mathbb{R}}
\newcommand{\Sph}{\mathbb{S}}
\DeclareMathOperator{\im}{Im}
\DeclareMathOperator{\re}{Re}
\newcommand{\ab}{|\psi|_{\varepsilon}}
\begin{document}

\title[A sharp criterion for zero modes of the Dirac equation --- \version]{A sharp criterion for zero modes\\ of the Dirac equation}

\author{Rupert L. Frank}
\address[Rupert L. Frank]{Mathe\-matisches Institut, Ludwig-Maximilans Universit\"at M\"unchen, The\-resienstr.~39, 80333 M\"unchen, Germany, and Munich Center for Quantum Science and Technology, Schel\-ling\-str.~4, 80799 M\"unchen, Germany, and Mathematics 253-37, Caltech, Pasa\-de\-na, CA 91125, USA}
\email{r.frank@lmu.de}

\author{Michael Loss}
\address[Michael Loss]{School of Mathematics, Georgia Institute of
	Technology, Atlanta, GA 30332-0160, USA}
\email{loss@math.gatech.edu}

\renewcommand{\thefootnote}{${}$} \footnotetext{\copyright\, 2022 by the authors. This paper may be reproduced, in its entirety, for non-commercial purposes.\\
Partial support through US National Science Foundation grants DMS-1954995 (R.L.F.) and DMS-1856645 (M.L.), as well as through the Deutsche Forschungsgemeinschaft (DFG, German Research Foundation) through Germany’s Excellence Strategy EXC-2111-390814868 (R.L.F.) is acknowledged.\\
AMS Subject Classification: Primary: 35F50; Secondary: 81V45, 47J10; Key words: Dirac Equation, Zero Modes}

\begin{abstract}
	It is shown that $\Vert A \Vert_{L^d}^2 \ge \frac{d}{d-2}\, S_d$ is a necessary 
	condition for the existence of a nontrivial solution of the Dirac equation $\gamma \cdot (-i\nabla -A)\psi = 0$ in $d$ dimensions.
	Here, $S_d$ is the sharp Sobolev constant. If $d$ is odd and $\Vert A \Vert_{L^d}^2= \frac{d}{d-2}\, S_d$, then there exist vector
	potentials that allow for zero modes. A complete classification of these vector potentials and their corresponding zero modes is given.
\end{abstract}

\maketitle

\section{Introduction and main result}

In this paper we are interested in sharp nonexistence results for nontrivial solutions of the zero mode equation
\begin{equation}
	\label{eq:eq}
	\gamma\cdot(-i\nabla-A)\psi = 0 
	\qquad\text{in}\ \R^d \,.
\end{equation}
It can be considered a sequel to our previous work  \cite{FrLo1}, to which we refer the reader for more background and references. Throughout, we will be working in spatial dimensions $d\geq 3$. Let
$$
N := 2^{[d/2]} \,.
$$
In \eqref{eq:eq}, $\gamma_1,\ldots,\gamma_d$ are Hermitian $N\times N$ matrices satisfying
$$
\gamma_j \gamma_k + \gamma_k \gamma_j = 2 \delta_{j,k}
\qquad\text{for all}\ 1\leq j,k\leq d \,.
$$
Moreover, for a vector $a\in\R^d$ we set $\gamma\cdot a := \sum_{j=1}^d \gamma_j a_j$. The gamma matrices are the generalization to higher dimensions of the usual Pauli matrices and reduce to them in dimension $d=3$. It is known that the gamma matrices are unique up to a simultaneous unitary conjugation.

The quantity $A$ in \eqref{eq:eq} is a vector field on $\R^d$. We will assume throughout that
\begin{equation}
	\label{eq:aass}
	A\in L^d(\R^d,\R^d) \,.
\end{equation}
The $L^d$ norm of $A$ appears naturally in this problem, as we will see below. Physically, $A$ is the vector potential corresponding to the magnetic field $\nabla\wedge A$. (This magnetic field is, in general, only defined as a distribution.)

Finally, the quantity $\psi$ in \eqref{eq:eq} is a spinor field, that is, a function from $\R^d$ to $\C^N$. We will assume that
$$
\psi \in L^p(\R^d,\C^N)
$$
for some $d/(d-1)<p<\infty$. We have shown in \cite{FrLo1} that, under assumption \eqref{eq:aass}, if $\psi\in L^p$ for \emph{some} $d/(d-1)<p<\infty$, then $\psi\in L^p$ for \emph{all} $d/(d-1)<p<\infty$.

We emphasize that we do not require any further assumptions besides \eqref{eq:aass} and $\psi\in L^p$ for some $d/(d-1)<p<\infty$. Under these assumptions, equation \eqref{eq:eq} is understood in the sense of distributions. The requirement \eqref{eq:aass} is critical in the $L^r$ scale and there is no reason for $\psi$ to be continuous.

Due to its close connection with the Pauli operator $[\sigma\cdot (-i\nabla -A)]^2 = (-i\nabla -A)^2 -\sigma\cdot B$, equation \eqref{eq:eq} has relevance in various physical contexts.  Zero modes play a role in quantum electrodynamics and in the problem of stability of matter interacting with magnetic fields. We refer to \cite{FrLo1} for further discussion and for references.

Nontrivial solutions $(\psi,A)$ to \eqref{eq:eq} were found in \cite{LoYa}. On the other hand, it is not hard to see, and we shall recall this momentarily, that, if $A$ is small in $L^d$, then \eqref{eq:eq} has only the trivial solution $\psi\equiv 0$. Note that the norm $\|A\|_{L^d}$ is a dimensionless quantity. Our goal here is to find the largest possible upper bound on the $L^d$ norm of $A$ that guarantees the nonexistence of nontrivial solutions. As we shall see, this bound is saturated for the zero modes from \cite{LoYa} and their generalization to higher, odd dimensions in \cite{DuMi}; see also \cite[Appendix]{FrLo1}. Thus, our result characterizes these zero modes as extremizers of an optimization problem. It is of interest that the fields that optimize this variational problem have non-trivial topologies. In fact, the field lines of  the optimizing $A$-field in $d=3$ dimensions are linked circles. The pattern is the one of the Hopf fibration on $\Sph^3$ mapped to $\R^3$ by the stereographic projection.

To appreciate the bound that we will be proving, let us recall the simple argument that shows that, if $A$ is small in $L^d$, then \eqref{eq:eq} has only the trivial solution $\psi\equiv 0$. It is based on the Sobolev inequality
$$
\int_{\R^d} |\nabla u|^2\,dx \geq S_d \left( \int_{\R^d} |u|^{\frac{2d}{d-2}}\,dx \right)^{\frac{d-2}{d}}
\qquad\text{for all}\ u\in\dot H^1(\R^d) \,.
$$
We agree to denote by $S_d$ the \emph{optimal} constant in this inequality. It is known \cite{Rod,Ro,Au,Ta} to have the explicit value
$$
S_d = \frac{d(d-2)}{4}\ |\Sph^d|^{\frac 2d} \,.
$$
If $(\psi,A)$ solves \eqref{eq:eq}, then
$$
\int_{\R^d} |\gamma\cdot(-i\nabla)\psi|^2\,dx = \int_{\R^d} |A|^2|\psi|^2\,dx \,.
$$
We bound the left side from below using the diamagnetic and the Sobolev inequality,
$$
\int_{\R^d} |\gamma\cdot(-i\nabla)\psi|^2\,dx =\int_{\R^d} |\nabla \psi|^2\,dx \geq \int_{\R^d} |\nabla|\psi||^2\,dx \geq S_d \left( \int_{\R^d} |\psi|^{\frac{2d}{d-2}}\,dx \right)^{\frac{d-2}{d}},
$$
and the right side from above using the H\"older inequality,
$$
 \int_{\R^d} |A|^2|\psi|^2\,dx \leq \|A\|_{L^d}^2 \left( \int_{\R^d} |\psi|^{\frac{2d}{d-2}}\,dx \right)^{\frac{d-2}{d}}.
$$
Thus, if $\psi$ is nontrivial, then
$$
\|A\|_{L^d}^2 \geq S_d \,.
$$
Note that through the use of the diamagnetic inequality, i.e., $|\nabla \psi|\ge |\nabla |\psi||$, we destroyed the non-scalar character of the spinor field. For more results on zero modes and their absence, as well as the diamagnetic inequality and its refinements, we refer the reader to the references in \cite{FrLo1}.

Our main result here is that the lower bound on $\|A\|_{L^d}^2$ can be improved to $(d/(d-2))\, S_d$. This is optimal, at least in odd dimensions. Its proof is based on an argument different from \cite{FrLo1}, avoiding the use of any sort of diamagnetic inequality.

Our result is one of the rare instances of a sharp functional inequality for non-scalar objects (vector fields and spinor fields). In contrast, by now there are many results about sharp functional inequalities for scalar objects. Without any attempt at completeness and restricting ourselves to inequalities involving derivatives, we mention as paradigmatic examples the isoperimetric inequality \cite{DG}, Sobolev inequalities \cite{Rod,Ro,Au,Ta}, Hardy--Littlewood--Sobolev inequalities \cite{Lie}, as well as their endpoint cases \cite{Be,CaLo} and some generalizations \cite{JeLe,BrFoMo,FrLi}. In many proofs of these inequalities, rearrangement techniques play an important role. More recently, optimal transport techniques \cite{CENaVi}, flow techniques \cite{CaCaLo,DoEsLo} and reflection techniques \cite{FrLi0} have been successfully employed. As far as we know, none of these techniques has been made to work in a non-scalar setting, and our proof uses different arguments.

Here is the precise statement of our main result.

\begin{theorem}\label{main}
	Let $d\geq 3$. If $\psi\in L^p(\R^d,\C^N)$ for some $d/(d-1)<p<\infty$ is a nontrivial solution of \eqref{eq:eq}, then
	$$
	\|A\|_{L^d}^2 \geq \frac{d}{d-2}\, S_d \,.
	$$
	Equality can be attained if and only if $d$ is odd.	
\end{theorem}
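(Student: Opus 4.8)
\emph{Strategy.} The plan is to recast \eqref{eq:eq} conformally on $\Sph^d$, where the magnetic potential is absorbed into a zeroth-order source for the \emph{non-magnetic} round Dirac operator, and then to carry out a magnetic analogue of the Dirac eigenvalue estimates of Friedrich and Hijazi. No diamagnetic inequality enters, and the spinor is kept throughout rather than replaced by its modulus. For the reduction, first note that by \cite{FrLo1} one has $\psi\in L^p(\R^d)$ for all $p\in(d/(d-1),\infty)$, so with $p=2d/(d-2)$ H\"older's inequality and \eqref{eq:aass} give $\gamma\cdot A\,\psi\in L^2$, hence $\gamma\cdot(-i\nabla)\psi\in L^2$; since $\int_{\R^d}|\gamma\cdot\nabla\psi|^2=\int_{\R^d}|\nabla\psi|^2$, this means $\psi\in\dot H^1(\R^d)$. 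Stereographic projection $\Sph^d\setminus\{N\}\to\R^d$ is conformal with conformal factor $\Omega=2/(1+|x|^2)$; as the $L^d$-norm of a one-form on a $d$-manifold is conformally invariant, $A$ corresponds to a one-form $B$ on $\Sph^d$ with $\|B\|_{L^d(\Sph^d)}=\|A\|_{L^d(\R^d)}$, and the spinor $\phi$ associated with $\Omega^{-(d-1)/2}\psi$ lies in $H^1(\Sph^d)$, is nontrivial (the decay of $\psi$ makes $N$ a removable point), and satisfies $D\phi=\gamma\cdot B\,\phi$ on $\Sph^d$, with $D$ the Dirac operator of the round metric $g$. Since $\tfrac{d}{d-2}S_d=\tfrac{d^2}{4}\,|\Sph^d|^{2/d}$, it suffices to prove $\|B\|_{L^d(\Sph^d)}^2\geq\tfrac{d^2}{4}\,|\Sph^d|^{2/d}$.

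\emph{The main estimate.} From $(\gamma\cdot B)^2=|B|^2$ one has $|D\phi|=|B|\,|\phi|$ pointwise. Assume first $\phi\neq0$ a.e.\ and perform the conformal change $\bar g:=|\phi|^{4/(d-1)}\,g$; under the canonical identification of spinor bundles, the spinor $\bar\phi$ attached to $|\phi|^{-1}\phi$ has constant length $|\bar\phi|_{\bar g}\equiv1$, and the conformal transformation law for $D$ together with the equation for $\phi$ gives
\[
\int_{\Sph^d}|D_{\bar g}\bar\phi|_{\bar g}^2\,dv_{\bar g}=\int_{\Sph^d}|B|^2\,|\phi|^{\frac{2(d-2)}{d-1}}\,dv_g .
\]
On the compact manifold $(\Sph^d,\bar g)$, the Schr\"odinger--Lichnerowicz formula combined with the twistor decomposition of $\nabla\bar\phi$ yields the integrated Friedrich inequality $\int|D_{\bar g}\bar\phi|_{\bar g}^2\,dv_{\bar g}\geq\tfrac{d}{4(d-1)}\int R_{\bar g}\,|\bar\phi|_{\bar g}^2\,dv_{\bar g}=\tfrac{d}{4(d-1)}\int R_{\bar g}\,dv_{\bar g}$. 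The last integral equals the Yamabe functional of $\bar g$ times $\mathrm{vol}(\bar g)^{(d-2)/d}$, and since $\bar g$ is conformal to the round metric it is therefore at least $Y(\Sph^d)\,\mathrm{vol}(\bar g)^{(d-2)/d}$, where $Y(\Sph^d)=d(d-1)\,|\Sph^d|^{2/d}=\tfrac{4(d-1)}{d-2}S_d$ and $\mathrm{vol}(\bar g)=\int_{\Sph^d}|\phi|^{2d/(d-1)}\,dv_g$. On the other hand, H\"older's inequality bounds the right-hand side of the displayed identity by $\|B\|_{L^d(\Sph^d)}^2\big(\int_{\Sph^d}|\phi|^{2d/(d-1)}\,dv_g\big)^{(d-2)/d}$. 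Chaining these and cancelling the common positive factor $\big(\int_{\Sph^d}|\phi|^{2d/(d-1)}\,dv_g\big)^{(d-2)/d}$ yields $\|B\|_{L^d(\Sph^d)}^2\geq\tfrac{d}{4(d-1)}\cdot d(d-1)\,|\Sph^d|^{2/d}=\tfrac{d^2}{4}\,|\Sph^d|^{2/d}$, which is the assertion. The hypothesis $\phi\neq0$ a.e.\ is removed by replacing $|\phi|$ throughout by $(|\phi|^2+\epsilon)^{1/2}$ and letting $\epsilon\to0$ at the end.

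\emph{Equality and the role of parity.} Equality in the chain forces equality at each step: equality in H\"older (so $|B|=c\,|\phi|^{2/(d-1)}$ with $c$ constant), equality in the Yamabe inequality (so $\bar g$ is isometric to a round metric), and vanishing of the twistor part of $\nabla\bar\phi$ (so $\bar\phi$ is a twistor spinor on that round sphere); being of constant length and not parallel, $\bar\phi$ is then a Killing spinor, $\phi$ is nowhere vanishing, and $D\phi$ is a constant-sign pointwise multiple of $\phi$. Hence $\hat B:=B/|B|$ is a globally defined, nowhere vanishing unit vector field on $\Sph^d$ with $\gamma\cdot\hat B\,\phi=\pm\phi$. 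Such a field exists iff $\chi(\Sph^d)=0$, i.e.\ iff $d$ is odd, which rules out equality for even $d$. For odd $d$, equality is attained: on $\Sph^d\subset\C^{(d+1)/2}$ the field $\hat B(\xi)=i\xi$ is tangent, of unit length, and nowhere zero, and a suitable Killing spinor $\phi$ satisfies $\gamma\cdot\hat B\,\phi=\phi$ (this is the Loss--Yau zero mode \cite{LoYa} for $d=3$, and the higher odd-dimensional construction of \cite{DuMi} in general); pulling $(\phi,\tfrac d2\hat B)$ back to $\R^d$ by stereographic projection gives a solution $(\psi,A)$ of \eqref{eq:eq} with $A\in L^d$ and $\psi\in\bigcap_{p>d/(d-1)}L^p$ realizing equality.

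\emph{Main obstacle.} The conceptual content is the clean passage through the Yamabe invariant of $\Sph^d$ in the second step. The technical heart, which I expect to be the main work, is making the singular conformal change $\bar g=|\phi|^{4/(d-1)}g$ and the associated Dirac and twistor identities rigorous at the available regularity ($B\in L^d$, $\phi\in H^1$, possibly with zeros) — in particular controlling the Yamabe functional of the nonsmooth metric $\bar g$ — and justifying the limit $\epsilon\to0$ in both the inequality and the equality discussion.
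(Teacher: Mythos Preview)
Your route to the inequality is essentially the paper's own argument in geometric dress. The paper's key identity (Proposition~\ref{identity}) \emph{is} the integrated Friedrich/Schr\"odinger--Lichnerowicz formula for the conformally rescaled metric $|\psi|^{2/(d-1)}g_{\mathrm{Euc}}$, translated back to flat coordinates (the paper says so explicitly in its ``Idea of the proof'' section, citing \cite{HeMo}); the Sobolev inequality applied afterwards is precisely the Yamabe inequality in the round conformal class. So the Hijazi-type chain you describe and the paper's chain are the same, and you correctly flag that the real work is making the singular conformal change $\bar g=|\phi|^{4/(d-1)}g$ rigorous --- which is exactly what the paper does via the $\epsilon$-regularisation in Proposition~\ref{identity}.

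The equality discussion, however, has a genuine gap. The implication ``constant-length, non-parallel twistor spinor on a round sphere $\Rightarrow$ Killing spinor'' is false for every $d\ge 4$. Decomposing a twistor spinor on round $\Sph^d$ as $\phi_++\phi_-$ with Killing constants $\pm\tfrac12$, one computes that $|\phi_++\phi_-|$ is constant iff, at each point, $\re\langle\phi_+,\phi_-\rangle=0$ and $\re\langle X\!\cdot\!\phi_+,\phi_-\rangle=0$ for all tangent $X$; since $\{\phi_+,\,e_1\phi_+,\ldots,e_d\phi_+\}$ are $\R$-orthogonal, this leaves a real $(2N-d-1)$-dimensional space of admissible $\phi_-$, which is positive for $d\ge4$. (Equivalently, on $\R^d$ the conditions $|\phi_0|=|\phi_1|=1$, $\re\langle\phi_0,\gamma_j\phi_1\rangle=0$ do \emph{not} force $\phi_1=\pm i\phi_0$ when $d\ge4$ --- this is why the paper's Remark after Theorem~\ref{main} and Theorem~\ref{main3} record that the \emph{scalar} variant admits equality in every dimension.) Hence you cannot conclude $D\phi=\pm|B|\phi$, and the step $\gamma\cdot\hat B\,\phi=\pm\phi$ fails. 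The paper instead feeds the explicit form of $\psi$ back into the zero-mode equation, recovers $A$ via $A_j=\re\langle\psi,\gamma_j\gamma\cdot(-i\nabla)\psi\rangle/|\psi|^2$, and obtains $A$ in terms of a skew matrix $M$ and a unit vector $w$; the constraint $|A|=d/(1+|x|^2)$ then forces $M^TM+|w\rangle\langle w|=\mathbf 1$ with $Mw=0$, whence $\ker M$ is one-dimensional, impossible for skew $M$ in even $d$. Your hairy-ball intuition is sound --- one does get $|B|=c|\phi|^{2/(d-1)}>0$ directly from the H\"older and Yamabe equality cases, without Killing spinors --- but turning it into a proof still requires showing that $B$ extends continuously across the pole of $\Sph^d$, and that regularity comes from exactly the explicit reconstruction of $A$ that the paper carries out.
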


More precisely, in odd dimensions we will characterize all pairs $(\psi,A)$ for which equality in the inequality of the lemma holds. We will state this as Theorem \ref{main2} below.

\begin{remark}\label{gauge}
	Equation \eqref{eq:eq} is gauge-invariant in the sense that if $(\psi,A)$ is a solution of this equation and if $\phi\in L_\loc^1(\R^d,\R)$ is weakly differentiable with $\nabla\phi\in L^d(\R^d,\R^d)$, then $(e^{i\phi}\psi,A+\nabla\phi)$ is also a solution of \eqref{eq:eq} and it satisfies the same integrability assumptions as $(\psi,A)$. Thus, our theorem implies the gauge-invariant bound
	$$
	\inf_\phi \|A-\nabla\phi\|_d^2 \geq \frac{d}{d-2}\, S_d \,.
	$$
	It is not hard to see that there is a unique (up to an additive constant) function $\phi_*$ that minimizes the expression on the left side. Hence, if one sets $A_* := A-\nabla \phi_*$, then using the minimum property one finds that  $\nabla\cdot [|A_*|^{d-2} A_* ] = 0$. One can easily check that the optimizing fields displayed in the next theorem satisfy this equation.
\end{remark}

\begin{remark}
	The problem of minimizing the norm $\|A\|_{L^d}$ among all $A$ that admit non-trivial solutions $\psi$ of \eqref{eq:eq} is conformally invariant, in the sense that, if $\Phi$ is a conformal transformation of $\R^d\cup\{\infty\}$, then $\tilde A(x):= (D\Phi(x))^T A(\Phi(x))$ has the same $L^d$ norm as $A$ and admits a non-trivial solution $\tilde\psi$ of \eqref{eq:eq}. To define $\tilde\psi$, we may use the fact that the conformal group is generated by translations, dilations, orthogonal transformations and inversion and define $\tilde\psi$ only for these generators. For translations and dilations the definition is clear and for orthogonal transformations it appears below in Theorem~\ref{main2}. For the inversion, we define $\tilde\psi(x) := |x|^{-d} \gamma\cdot x \ \psi(x/|x|^2)$ and check that this indeed is a zero mode. Note also that $\tilde\psi$ has the same $L^\frac{2d}{d-1}$-norm as $\psi$.
\end{remark}

\begin{remark}
Inspection of the proof shows that the conclusion of the theorem holds under a somewhat weaker assumption. Namely, if $0\not\equiv\psi \in L^p(\R^d,\C^N)$ with $p= \frac{2d}{d-1}$ satisfies the inequality
$$
|\gamma\cdot \nabla \psi|\le |A||\psi|
\qquad\text{in}\ \R^d \,,
$$
then
$$
	\|A\|_{L^d}^2 \geq \frac{d}{d-2}\, S_d \,.
$$
In this bound, equality can be attained for any (not necessarily odd) $d \ge 3$; see Theorem \ref{main3} in the appendix.
\end{remark}


\subsection*{Characterization of cases of equality}

Throughout this subsection, we assume that $d\geq 3$ is odd. Our goal is to classify all solution pairs $(\psi,A)$ of \eqref{eq:eq} such that $\|A\|_{L^d}^2 = (d/(d-2))S_d$ and $\psi\not\equiv 0$. In essence, our result says that these solution pairs are exactly those constructed in \cite{LoYa} in dimension three, as well as their extension to higher dimensions in \cite{DuMi}. We use the formulation of the latter result in \cite[Appendix]{FrLo1}.

Before stating characterization result, let us review this construction of zero modes. We introduce the $d\times d$ skew symmetric $\Sigma$,
$$
\Sigma := \begin{pmatrix}
	0 & & & & & \\
	& 0 & -1 & & & \\
	& 1 & 0 & & & \\
	& & & \ddots & &\\
	& & & & 0 & -1 \\
	& & & & 1 & 0
\end{pmatrix}.
$$
On the top left corner, there is a zero entry and then there are $(d-1)/2$ blocks of $-i\sigma_2$-matrices on the diagonal. The remaining entries are zero. We define the vector field $\mathcal A:\R^d\to\R^d$ by
$$
\mathcal A(x) := d \left( \frac{1}{1+|x|^2} \right)^2 \left( (1-|x|^2) e_1 + 2 x_1 x + 2\Sigma x\right).
$$
Next, we recall that there is a unique (up to a phase) $\Psi_0\in\C^N$ with $|\Psi_0|=1$ such that
\begin{equation}
	\label{eq:vaccuum}
	\frac12\left( \gamma_{2\alpha} + i\gamma_{2\alpha+1} \right) \Psi_0 =0
	\qquad\text{for all}\ \alpha=1,\ldots,\frac{d-1}{2} \,;
\end{equation}
see \cite[Lemma A.3]{FrLo1} for the existence and \cite[Lemma A.5]{FrLo1} for the uniqueness up to a phase. We know from \cite[Discussion after Lemma A.5]{FrLo1} that there is an $s\in\{+1,-1\}$ such that $\gamma_1 \Psi_0 = s \Psi_0$. We define
$$
\Psi(y) := \left( \frac{1}{1+|x|^2} \right)^\frac{d}{2} \left( 1 +is\gamma\cdot x \right)\Psi_0 \,.
$$
Finally, we recall that for any $O\in\mathcal O(d)$, the orthogonal $d\times d$ matrices, there is a $U\in\mathcal U(N)$, the unitary $N\times N$ matrices, such that
\begin{equation}
	\label{eq:ou}
	U^* \gamma_j U = \sum_{k=1}^d \gamma_k O_{k,j} 
	\qquad\text{for all}\ j=1,\ldots, d\,;
\end{equation}
see \cite[Corollary A.2]{FrLo1}.

A computation (see \cite[Appendix]{FrLo1} and also Section \ref{sec:equality2} below) shows that the pair $(\Psi,\mathcal A)$ solves \eqref{eq:eq} and that $|\mathcal A(x)|= d (1+|x|^2)^{-1}$, so
\begin{equation}
	\label{eq:optconstcomp}
	\|A\|_{L^d}^2 = d^2 \left( \int_{\R^d} \frac{dx}{(1+|x|^2)^d} \right)^{\frac 2d} = d^2 2^{-2} |\Sph^d|^{\frac 2d} = \frac{d}{d-2} \, S_d \,.
\end{equation}
Thus, $(\Psi,\mathcal A)$ saturates the bound in Theorem \ref{main}.

Moreover, for $a\in\R^d$, $b>0,c>0$, $O\in\mathcal O(d)$ and $U\in\mathcal U(N)$, related by \eqref{eq:ou}, the pair
$$
\left( c\, U^*\, \Psi(O^{-1}(x-a)/b) \,,\ b^{-1}\, O\, \mathcal A(O^{-1}(x-a)/b) \right)
$$
is also a solution of \eqref{eq:eq} and the $L^d$ norm of the vector potential is unchanged. Here we use the fact for general spinor fields $\psi$ and $\tilde\psi$ related by $\tilde\psi(x)=U^*\psi(O^{-1}x)$, one has
$$
(\gamma\cdot(-i\nabla)\tilde\psi)(x) = U^* (\gamma\cdot(-i\nabla)\psi)(O^{-1}x) \,.
$$
This follows from a simple computation using \eqref{eq:ou}. Note that besides the parameters $a$, $b$, $c$, $O$ and $U$, there is also a one-dimensional parameter coming from the choice of the phase of $\Psi_0$.

\begin{theorem}\label{main2}
	Let $d\geq 3$ be odd. If $\psi\in L^p(\R^d,\C^N)$ for some $d/(d-1)<p<\infty$ is a nontrivial solution of \eqref{eq:eq} with
	$$
	\|A\|_{L^d}^2 = \frac{d}{d-2}\, S_d \,,
	$$
	then there are $a\in\R^d$, $b>0,c>0$, $O\in\mathcal O(d)$ and $U\in\mathcal U(N)$, related by \eqref{eq:ou}, as well as a $\Psi_0\in\C^N$ with $|\Psi_0|=1$ satisfying \eqref{eq:vaccuum} such that, for all $x\in\R^d$,
	$$
	\psi(x) = c\, U^*\, \Psi(O^{-1}(x-a)/b) 
	\qquad\text{and}\qquad
	A(x) = b^{-1}\, O \, \mathcal A(O^{-1}(x-a)/b) \,.
	$$
\end{theorem}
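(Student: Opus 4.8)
The plan is to run the proof of Theorem~\ref{main} in reverse. That proof bounds $\|A\|_{L^d}^2$ from below by $\frac{d}{d-2}S_d$ through a chain of sharp inequalities — a Hölder inequality, a sharp Sobolev-type inequality for the scalar function $|\psi|^{(d-2)/(d-1)}$, and pointwise estimates exploiting the spinor structure — so the hypothesis $\|A\|_{L^d}^2=\frac{d}{d-2}S_d$ forces equality in each of them. The first task is to harvest this rigidity. Equality in the Sobolev-type step makes $|\psi|^{(d-2)/(d-1)}$ an extremizer; by the classification of Sobolev extremizers it is a translate and dilate of the standard bubble $(b^2+|x-a|^2)^{-(d-2)/2}$, so
$$
|\psi(x)| = c\,(b^2+|x-a|^2)^{-(d-1)/2}
\qquad\text{for some } a\in\R^d,\ b,c>0,
$$
and in particular $\psi$ vanishes nowhere. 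Equality in the Hölder step gives the pointwise proportionality $|A(x)| = \kappa\,|\psi(x)|^{2/(d-1)}$. Finally, equality in the spinorial estimates forces $\psi$ to be \emph{coherent}, $|\langle\psi,\gamma\,\psi\rangle| = |\psi|^2$ pointwise, with $A$ parallel to $\langle\psi,\gamma\,\psi\rangle$.

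Next I would normalize and pass to the sphere. After a translation, dilation and scaling (symmetries of \eqref{eq:eq}) we may take $a=0$, $b=1$, and matching $\kappa$ against $\|A\|_{L^d}^2=\frac{d}{d-2}S_d$ as in \eqref{eq:optconstcomp} gives $|A(x)| = d\,(1+|x|^2)^{-1}$, so $|\psi|$ and $|A|$ are now the smooth moduli of $\Psi$ and $\mathcal A$. Since these are smooth and bounded, a regularity bootstrap for \eqref{eq:eq} makes $\psi$, and hence $A$ (determined pointwise from $\psi$), smooth. Now use the conformal invariance of the problem to transplant $(\psi,A)$ to $\Sph^d$ by stereographic projection: because $(d-1)/2$ and $1$ are the conformal weights of a spinor and a $1$-form, the relations $|\psi(x)| = c\,(1+|x|^2)^{-(d-1)/2}$ and $|A|=\kappa|\psi|^{2/(d-1)}$ become $|\psi^{\Sph}|\equiv\text{const}$ and $|A^{\Sph}|\equiv d/2$, while \eqref{eq:eq} becomes $D_{\Sph^d}\psi^{\Sph} = (\gamma\cdot A^{\Sph})\psi^{\Sph}$. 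Hence $|D_{\Sph^d}\psi^{\Sph}| = \frac d2\,|\psi^{\Sph}|$ pointwise, so $\int_{\Sph^d}|D_{\Sph^d}\psi^{\Sph}|^2 = (d/2)^2\int_{\Sph^d}|\psi^{\Sph}|^2$; since $d/2$ is the lowest eigenvalue of $|D_{\Sph^d}|$, the spectral theorem forces $\psi^{\Sph}$ to lie in the $\pm\frac d2$-eigenspace, i.e.\ $\psi^{\Sph}$ is a Killing spinor, and it is moreover coherent.

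It remains to classify the coherent Killing spinors on $\Sph^d$ and read off $A$. Every Killing spinor has constant modulus; a coherent one determines, through its coherence direction, a unit vector field, and after applying an isometry of $\Sph^d$ — equivalently, a space rotation $O\in\mathcal O(d)$ and the associated $U\in\mathcal U(N)$ from \eqref{eq:ou} — one may put $\psi^{\Sph}$ into the normal form whose pull-back to $\R^d$ is $c\,U^*\Psi(O^{-1}x)$ with $\Psi$ built from a unit spinor $\Psi_0$ satisfying \eqref{eq:vaccuum} (and $\gamma_1\Psi_0 = s\Psi_0$). Once $\psi$ is known, $A$ is uniquely determined by \eqref{eq:eq}: since $\psi$ vanishes nowhere and $|(\gamma\cdot a)\psi(x)|^2 = |a|^2|\psi(x)|^2$, the map $a\mapsto(\gamma\cdot a)\psi(x)$ is injective, so $(\gamma\cdot A(x))\psi(x) = \gamma\cdot(-i\nabla)\psi(x)$ determines $A(x)$, and the computation of \cite[Appendix]{FrLo1} identifies it with $\mathcal A$ in the normalized variables. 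Undoing the translation, dilation, scaling and rotation yields the parametrization claimed, the remaining one-parameter freedom being the phase of $\Psi_0$.

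The main obstacle is the last step: classifying the coherent Killing spinors and exhibiting the normal form with $\Psi_0$ obeying \eqref{eq:vaccuum}. This is where oddness of $d$ is essential — the vacuum spinor of \eqref{eq:vaccuum} is unique up to a phase only for odd $d$ (\cite[Lemma~A.5]{FrLo1}), which is what makes this classification rigid; for even $d$ the corresponding uniqueness fails and, as asserted in Theorem~\ref{main}, no extremal pair exists. A secondary, more technical point is to justify the computations with $|\psi|$, its powers and the unit spinor $\psi/|\psi|$ — needed to even define the scalar quantity entering the Sobolev step and to carry out the above — in the presence of a priori possible zeros of $\psi$, before the extremizer analysis rules them out; this relies on the a priori integrability $\psi\in L^p$, the criticality of \eqref{eq:aass}, and elliptic regularity for \eqref{eq:eq}.
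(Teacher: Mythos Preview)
Your route diverges from the paper's in an interesting way, but there is a genuine gap.

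\textbf{What the paper actually does.} The equality analysis (Proposition~\ref{equality1}) shows that when $\|A\|_{L^d}^2=\frac{d}{d-2}S_d$, four nonnegative terms in the proof of the inequality must vanish in the limit $\epsilon\to 0$: the H\"older deficit (yielding $|A|=\kappa|\psi|^{2/(d-1)}$), the Sobolev deficit (yielding the bubble form of $|\psi|$), and --- crucially --- the Penrose term $P_\epsilon$, whose vanishing gives the \emph{twistor equation} \eqref{eq:equalitytwistor} for $\psi/|\psi|^{d/(d-1)}$. Theorem~\ref{twistor} then classifies twistor spinors as $\phi_0+\gamma\cdot x\,\phi_1$, and the bulk of the argument (Section~\ref{sec:equality2}) is the algebra: recovering $A$ from $\psi$, extracting the skew matrix $M$ and vector $w$, deducing $M^TM+|w\rangle\langle w|=1$ (whence $d$ odd), and then diagonalizing $M$ to $\Sigma$ via $O$ and showing $U\phi_0$ satisfies the vacuum relations \eqref{eq:vaccuum}.

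\textbf{What you do differently, and where it goes wrong.} Your passage to $\Sph^d$ and the spectral argument is a legitimate alternative to the twistor step: once $|A^{\Sph}|\equiv d/2$, the identity $\int|D_{\Sph^d}\psi^{\Sph}|^2=(d/2)^2\int|\psi^{\Sph}|^2$ forces $\psi^{\Sph}$ into $\ker(D_{\Sph^d}^2-(d/2)^2)$, which on $\R^d$ is exactly the twistor family $(1+|x|^2)^{-d/2}(\phi_0+\gamma\cdot x\,\phi_1)$. This is elegant and bypasses the explicit twistor classification. Note, however, that you only get a \emph{sum} of $+d/2$ and $-d/2$ eigenspinors, not a single Killing spinor; the constraint $|\psi^{\Sph}|=\mathrm{const}$ then plays the role of \eqref{eq:parameters}.

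The gap is your ``coherence'' claim. Nothing in the chain of inequalities in the proof of Theorem~\ref{main} produces $|\langle\psi,\gamma\,\psi\rangle|=|\psi|^2$; the spinorial input is the identity of Proposition~\ref{identity}, and equality there yields the twistor equation, not coherence. You assert coherence as a consequence of ``equality in the spinorial estimates'' and then lean on it for the final classification, but it is neither derived nor, in fact, needed: what actually pins down $\phi_0,\phi_1$ (and forces $d$ odd) is the zero-mode equation itself, fed back in as in Section~\ref{sec:equality2}. You defer precisely this step --- the algebraic reduction to the normal form with $O,U,\Psi_0$ --- calling it ``the main obstacle,'' but that is where all the content lies, including the mechanism by which oddness of $d$ enters (the kernel of the skew matrix $M$ is one-dimensional). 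Without carrying out that reduction, or replacing it by a genuine classification of the relevant Killing spinors together with a derivation of the extra constraint you need, the proof is incomplete.
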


We emphasize there are solutions to \eqref{eq:eq} different from the extremal ones given in this theorem. In particular, for $\mathcal A$ as above, but multiplied by a certain discrete family of coupling constants $>1$, there are nontrivial solutions to \eqref{eq:eq}; see \cite{LoYa} for $d=3$ and \cite{Mi} for arbitrary odd $d\geq 3$.

\begin{remark}\label{scalarineq}
	In \cite{FrLo1}, besides equation \eqref{eq:eq}, we considered the closely related equation
	\begin{equation}
		\label{eq:eqscalar}
		\gamma\cdot(-i\nabla)\psi = \lambda\,\psi
	\end{equation}
	with a real function $\lambda\in L^d(\R^d)$. We proved that, if $\psi\in L^p(\R^d,\C^N)$ for some $d/(d-1)<p<\infty$ is a nontrivial solution of \eqref{eq:eqscalar}, then
	\begin{equation*}
		\|\lambda\|_{L^d}^2 \geq \frac{d}{d-2}\ S_d \,.
	\end{equation*}
	This inequality is sharp in any, not necessarily odd, dimension $d\geq 3$. The techniques that we develop in the proof of Theorem \ref{main2} allow us to classify the cases of equality in this inequality. We state this as Theorem \ref{main3} in the appendix.
\end{remark}


\subsection*{Relation to Sobolev inequalities} 

In our previous paper \cite{FrLo1} we considered a related, but different problem. There, we were looking for nonexistence results for nontrivial solutions of \eqref{eq:eq} in terms of the norm $\|\nabla\wedge A\|_{L^{d/2}}$. In contrast to our result here, the result in \cite{FrLo1} is probably not optimal. (On the other hand, as mentioned before in Remark \ref{scalarineq} above, \cite{FrLo1} contains an optimal result on a scalar version of this problem.)

In \cite{FrLo1} we also posed the problem of finding the sharp constant $S_d^{\rm v}$ in the Sobolev inequality for vector fields,
\begin{equation}
	\label{eq:sobvector}
	\|\nabla\wedge A\|_{L^{d/2}}^{d/2} \geq S_d^{\rm v}\ \inf_\phi \|A-\nabla\phi\|_d^{d/2} \,.
\end{equation}
In odd dimensions, the vector field $\mathcal A$ satisfies the corresponding Euler--Lagrange equation and it is conceivable that it is an optimizer. If this were true, we could combine the sharp version of \eqref{eq:sobvector} with the inequality in our Theorem \ref{main} here (see also Remark \ref{gauge}) and would obtain an optimal version of the bound in \cite{FrLo1}. Equality would be attained by the same pairs $(\psi,A)$ as given in Theorem \ref{main2}.

We note also that in \cite{FrLo2} we proved both the existence of an optimizer $A$ for \eqref{eq:sobvector} and the existence of optimizing solution pair $(\psi,A)$ such that $\nabla\wedge A$ has minimal $L^{\frac d2}$ norm.

In \cite{FrLo1} we also mentioned a second Sobolev-type inequality, namely, for spinor fields,
\begin{equation}
	\label{eq:sobspinor}
\|\gamma\cdot(-i\nabla)\psi \|_{L^{\frac{2d}{d+1}}}^\frac{2d}{d+1} \geq S_d^{\rm s}\ \|\psi \|_{L^\frac{2d}{d-1}}^\frac{2d}{d+1} \,.
\end{equation}
For any (not necessarily odd) $d\geq 2$, the functions
\begin{equation}
	\label{eq:sobspinoropt}
	\left( \frac{1}{1+|x|^2} \right)^\frac d2 \left( \phi_0 + \gamma\cdot x \phi_1\right)
\end{equation}
with $\phi_0,\phi_1\in\C^N$ with $|\phi_0|=|\phi_1|$ and $\re \langle \phi_0,\gamma_j\phi_1\rangle=0$, $j=1,\ldots,d$, satisfy the corresponding Euler--Lagrange equation and it is conceivable that they are optimizers. If this was true, then the inequality in our main result, Theorem \ref{main}, would immediately follow from
$$
\left\| \gamma\cdot(-i\nabla)\psi \right\|_{L^{\frac{2d}{d+1}}} = \left\| \gamma\cdot A \psi \right\|_{L^{\frac{2d}{d+1}}} = \left\| | A | \psi \right\|_{L^{\frac{2d}{d+1}}} \leq \|A\|_{L^d} \|\psi\|_{L^\frac{2d}{d-1}} \,.
$$
Conversely, our Theorem \ref{main} gives further credence to the conjecture that the sharp constant in \eqref{eq:sobspinor} is attained for the functions in \eqref{eq:sobspinoropt}.

Finding the optimal constants in \eqref{eq:sobvector} and \eqref{eq:sobspinor} remains an \emph{open problem}. 


\subsection*{Idea of the proof}

We emphasize that our proof is valid under the rather weak assumptions $A\in L^d$ and $\psi\in L^p$ for some $d/(d-1)<p<\infty$. In particular, under these assumptions there is no reason for $\psi$ to be continuous. Also, we will need to take derivatives of powers of $|\psi|$, which a priori could lead to problems near the zero set $\{\psi=0\}$. Handling these issues makes our proof somewhat lengthy.

In order to convey the basic idea of our proof, we sketch here the argument ignoring these issues. In other words, we assume that $\psi$ is smooth and non-vanishing. Also, for sake of simplicity, we restrict ourselves to the case where $d=3$.

We start with an integrated version of the Schr\"odinger--Lichnerowicz identity
\begin{align} \label{lichner}
& \sum_{j=1}^3 \int_{\R^3} \left| \left[-i\partial_j - \frac13 \gamma_j \gamma\cdot (-i\nabla) \right](\phi^{\frac 32}\psi) \right|^2 \phi^{-2} \, dx \notag \\
& = \frac{2}{3}  \int_{\R^3} |\gamma \cdot \nabla \psi|^2 \phi \,dx + 2 \int_{\R^3}  |\psi|^2  \frac{\Delta \eta}{\eta} \phi \,dx  \, . 
\end{align}
Here $\psi$ is a smooth spinor, $\phi$ a strictly positive smooth function and 
$$
\eta =\phi^{-\frac{1}{2}} \ . 
$$
The (pointwise) Schr\"odinger--Lichnerowicz identity is named after the papers \cite{Sc,Li}. We apply this pointwise identity on $\R^3$ endowed with $\phi^{-2}$ times the Euclidean metric and with the Dirac and Penrose operators corresponding to this metric. Translating back to the standard metric and integrating we obtain \eqref{lichner}; see \cite[Lemma 3.2 and the discussion afterwards]{HeMo} for a related argument.

Next, in \eqref{lichner} we pick $\phi = |\psi|^{-1}$, i.e.,
$$
\eta = |\psi|^{\frac 12} \,,
$$
and compute 
$$
2 \int_{\R^3}   |\psi|^2  \frac{\Delta \eta}{\eta} \phi \, dx = 2 \int_{\R^3}   |\psi|^{\frac 12} \Delta |\psi|^{\frac12} \,dx = - 2 \int_{\R^3}  \left|\nabla |\psi|^{\frac12}\right|^2 dx \, .
$$
As a consequence of \eqref{lichner}, we find that
$$
 \int_{\R^3} \frac{ |\gamma \cdot \nabla \psi|^2}{|\psi|} dx \ge 3 \int_{\R^3}  \left|\nabla |\psi|^{\frac12}\right|^2 dx  \, .
$$
Applying this inequality to a zero mode $\psi$, i.e., $-i\gamma \cdot \nabla \psi = \gamma \cdot A \psi$ yields
$$
 \int_{\R^3} |A|^2 |\psi| \,dx \ge 3 \int_{\R^3}  \left|\nabla |\psi|^{\frac12}\right|^2 dx \ge 3\,  S_3\, \Vert \psi \Vert_3
$$ 
Applying H\"older's inequality in the left side yields $\Vert A\Vert^2_3 \ge 3S_3$ which is the desired conclusion.

If  $\Vert A\Vert^2_3 = 3S_3$ then there is equality in the Sobolev inequality  and, moreover, the left side of \eqref{lichner}
has to vanish. This means one has to find the twistor spinors, i.e., solutions $\Phi$ of the equations
$$
\left[-i\partial_j - \frac13 \gamma_j \gamma\cdot (-i\nabla)\right] \Phi = 0
\qquad\text{for all}\ j=1,2,3 \,,
$$
which are known.  The cases of equality in the Sobolev inequality are known as well and the relation
$\frac{\psi }{|\psi|^{3/2}} = \varphi$ will yield the optimizing zero modes.

Needless to say that a-priori
we cannot assume that the spinors are smooth, nor do we know that they are nonzero. In the next section
we describe how one can develop a formula like \eqref{lichner} for Sobolev functions.


\subsection*{Acknowledgement}

The authors are grateful to G.\ Carron for making them aware of the paper \cite{HeMo}.


\section{An integral identity}

As mentioned before, the key ingredient in our proof is a certain integral identity. We state the identity for functions in $\dot H^1(\R^d,\C^N)$ (sometimes also denoted by $D^1(\R^d,\C^N)$), which is the space of all weakly differentiable $\psi\in L^1_\loc(\R^d)$ such that $\nabla\psi\in L^2(\R^d)$ and $|\{|\psi|>\tau\}|< \infty$ for all $\tau>0$. Sometimes, for technical reasons, we need to  consider the following regularization of a function $\psi$ on $\R^d$,
$$
\ab := \sqrt{|\psi|^2 + \varepsilon^2} \,,\qquad \varepsilon>0 \,.
$$
This section is devoted to the proving the following result. 

\begin{proposition}\label{identity}
	Let $d\geq 3$. If $\psi\in\dot H^1(\R^d,\C^N)$, then, for all $\epsilon>0$,
	\begin{align*}
		&\int_{\R^d} \sum_{j=1}^d \Big | \left[-i\partial_j - \frac 1d \gamma_j \gamma \cdot (-i\nabla)\right] \frac{\psi}{\ab^{d/(d-1)}}\Big |^2 \ab^2 \,dx\\
		& = \frac{d-1}{d} \int_{\R^d} \frac{|\gamma \cdot \nabla \psi|^2}{\ab^{2/(d-1)}} \,dx 
		- \frac{d-1}{(d-2)^2} \int_{\R^d} \left|\nabla \ab^{\frac{d-2}{d-1}} \right|^2 \left[ 2(d-1) - d\, \frac{ |\psi|^2}{\ab^2} \right] dx \,.
	\end{align*}
\end{proposition}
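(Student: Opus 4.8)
The plan is to establish the identity by a direct, if careful, computation, first under the simplifying assumption that $\psi$ is smooth and that the regularized function $\ab$ is everywhere positive (which it is, since $\varepsilon>0$), and then to justify the manipulations for general $\psi\in\dot H^1$ by an approximation argument. The point of the regularization $\ab=\sqrt{|\psi|^2+\varepsilon^2}$ is precisely that it is smooth and bounded below by $\varepsilon$ even when $\psi$ vanishes, so all the powers $\ab^{\alpha}$ appearing in the statement are legitimate $C^1$ (in fact $C^\infty$ in $\ab$) functions and one may differentiate freely.

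First I would expand the left-hand side pointwise. Writing $\Phi := \psi\,\ab^{-d/(d-1)}$ and $D_j := -i\partial_j - \tfrac1d\gamma_j\,\gamma\cdot(-i\nabla)$, one has $\sum_j |D_j\Phi|^2 = \sum_j |\partial_j\Phi|^2 - \tfrac2d\,\re\sum_j\langle\partial_j\Phi,\gamma_j\gamma\cdot\nabla\Phi\rangle + \tfrac1{d^2}\sum_j|\gamma_j\gamma\cdot\nabla\Phi|^2$; using the Clifford relations $\gamma_j\gamma_k+\gamma_k\gamma_j=2\delta_{jk}$ the cross term collapses to $\tfrac{2}{d}|\gamma\cdot\nabla\Phi|^2$ and the last term is $\tfrac1d|\gamma\cdot\nabla\Phi|^2$, so altogether $\sum_j|D_j\Phi|^2 = |\nabla\Phi|^2 - \tfrac1d|\gamma\cdot\nabla\Phi|^2$. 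Thus the left side equals $\int \big(|\nabla\Phi|^2 - \tfrac1d|\gamma\cdot\nabla\Phi|^2\big)\ab^2\,dx$. Next I substitute $\Phi = \psi\,\ab^{-d/(d-1)}$ and expand $\nabla\Phi$ and $\gamma\cdot\nabla\Phi$ by the product rule; the weight $\ab^2$ partially cancels the negative powers, and one is left, after collecting terms, with an expression of the form $\tfrac{d-1}{d}\int\ab^{-2/(d-1)}|\gamma\cdot\nabla\psi|^2\,dx$ plus terms involving $\nabla\ab$ and $\re\langle\psi,\gamma\cdot\nabla\psi\rangle$ (equivalently $\nabla|\psi|^2 = \nabla(\ab^2-\varepsilon^2)$, which is where the factor $|\psi|^2/\ab^2$ will enter) and a term proportional to $|\nabla\ab|^2|\psi|^2/\ab^2$. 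The scalar-field terms are then reorganized using $\nabla\ab^{(d-2)/(d-1)} = \tfrac{d-2}{d-1}\ab^{-1/(d-1)}\nabla\ab$, so that $|\nabla\ab^{(d-2)/(d-1)}|^2 = \big(\tfrac{d-2}{d-1}\big)^2\ab^{-2/(d-1)}|\nabla\ab|^2$; matching the bracketed coefficient $[2(d-1)-d|\psi|^2/\ab^2]$ is a matter of bookkeeping the two contributions (one from $|\nabla\Phi|^2$, one from $-\tfrac1d|\gamma\cdot\nabla\Phi|^2$) and using that an integration by parts converts a $\re\langle\psi,\gamma\cdot\nabla\psi\rangle\cdot\nabla\ab$-type term, or a $\psi\cdot\nabla$ of something, into a pure $|\nabla\ab|^2$ term plus possibly a $\Delta$ term that must vanish or recombine. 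I would carry this out first for $d=3$ to fix the pattern and then for general $d$.

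The main obstacle — and the reason the authors say the proof is "somewhat lengthy" — is the passage from smooth $\psi$ to general $\psi\in\dot H^1$, together with making sure every term in the final identity is finite and that the integrations by parts are legitimate with no boundary contribution at infinity or hidden singularity. Here the regularization is essential: for $\psi\in\dot H^1$ one has $\nabla\ab = \re\langle\psi/\ab,\nabla\psi\rangle$ a.e. (the chain rule for Sobolev functions composed with the Lipschitz map $t\mapsto\sqrt{t^2+\varepsilon^2}$ applied componentwise, or more precisely $\ab\in\dot H^1$ with $|\nabla\ab|\le|\nabla\psi|$), so $|\nabla\ab|\in L^2$ and all the powers of $\ab$ are bounded away from $0$; one should check $\ab^{(d-2)/(d-1)}$ is weakly differentiable with $L^2$ gradient (using $|\psi|\le\ab$ and $\psi\in\dot H^1\subset L^{2d/(d-2)}$ by Sobolev, so $\ab^{(d-2)/(d-1)}-\varepsilon^{(d-2)/(d-1)}$ lies in the right space), and that $\ab^{-2/(d-1)}|\gamma\cdot\nabla\psi|^2$ is integrable since $\ab\ge\varepsilon$. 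Then I would take a sequence $\psi_n\in C_c^\infty$ (or Schwartz) with $\psi_n\to\psi$ in $\dot H^1$, establish the identity for each $\psi_n$ by the computation above, and pass to the limit: the gradient terms converge by $L^2$ convergence, and the nonlinear weights $\ab[\psi_n]$ converge locally uniformly (since $t\mapsto\sqrt{t^2+\varepsilon^2}$ is Lipschitz and $\psi_n\to\psi$ in measure plus domination), which suffices after splitting integrals into a large ball and its complement and using uniform integrability. The left side requires care because it involves second derivatives implicitly through $\gamma\cdot(-i\nabla)$ acting inside; one writes it out as above in terms of $\nabla\Phi$ only, and $\nabla\Phi\in L^2$ follows from $\psi\in\dot H^1$ and the smoothness and boundedness of $x\mapsto\ab^{-d/(d-1)}$ together with $\nabla\ab\in L^2$ and $\psi\in L^{2d/(d-2)}$, Hölder, and the support splitting. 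Once the identity is known for all $\psi\in\dot H^1$ and all $\varepsilon>0$, the Proposition is proved; later sections will send $\varepsilon\to0$ to recover the geometric consequence.
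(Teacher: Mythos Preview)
Your overall architecture matches the paper's: start from the pointwise Penrose identity $\sum_j|D_j\Phi|^2=|\nabla\Phi|^2-\tfrac1d|\gamma\cdot\nabla\Phi|^2$, expand with $\Phi=\psi\,\ab^{-d/(d-1)}$, and integrate. But you misidentify the crucial step. After expanding $|\nabla\Phi|^2\ab^2$ and $-\tfrac1d|\gamma\cdot\nabla\Phi|^2\ab^2$ via the product rule you do \emph{not} get ``an expression of the form $\tfrac{d-1}{d}\int\ab^{-2/(d-1)}|\gamma\cdot\nabla\psi|^2$''; you get
\[
\ab^{-2/(d-1)}|\nabla\psi|^2 \;-\;\tfrac1d\,\ab^{-2/(d-1)}|\gamma\cdot\nabla\psi|^2
\;+\;\text{(gradient-of-$\ab$ terms)}\;+\;\tfrac{2}{d-1}\,\ab^{-2/(d-1)-1}\re\langle\gamma\cdot(\nabla\ab)\psi,\gamma\cdot\nabla\psi\rangle .
\]
The $|\nabla\psi|^2$ term is genuinely there, and eliminating it is the heart of the proof. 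What is needed is a weighted Bochner-type identity: using $|\nabla\psi|^2-|\gamma\cdot\nabla\psi|^2=-\sum_{j\ne k}\langle\partial_j\psi,\gamma_j\gamma_k\partial_k\psi\rangle$ and integrating by parts against the weight $f=\ab^{-2/(d-1)}$ (the anticommutation relations kill the second-derivative terms), one obtains
\[
\int \ab^{-2/(d-1)}|\nabla\psi|^2
=\int \ab^{-2/(d-1)}|\gamma\cdot\nabla\psi|^2
+\tfrac{2(d-1)}{(d-2)^2}\int\big|\nabla\ab^{\frac{d-2}{d-1}}\big|^2
-\tfrac{2}{d-1}\int \ab^{-1-2/(d-1)}\re\langle\gamma\cdot(\nabla\ab)\psi,\gamma\cdot\nabla\psi\rangle.
\]
Substituting this, the cross terms cancel exactly and one is left with $\tfrac{d-1}{d}\int\ab^{-2/(d-1)}|\gamma\cdot\nabla\psi|^2$ plus the stated $|\nabla\ab^{(d-2)/(d-1)}|^2$ terms. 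Your description of the integration by parts (``converts a $\re\langle\psi,\gamma\cdot\nabla\psi\rangle\cdot\nabla\ab$-type term into a pure $|\nabla\ab|^2$ term'') is the wrong picture: the integration by parts acts on the off-diagonal $\gamma_j\gamma_k$ pieces of $|\nabla\psi|^2-|\gamma\cdot\nabla\psi|^2$, not on a scalar cross term.

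On the regularity issue, your plan (prove for $\psi_n\in C_c^\infty$, pass to the limit in $\dot H^1$) is a legitimate alternative to the paper's route (work directly with $\psi\in\dot H^1$, use the chain rule for Sobolev functions, and remove a cutoff $\chi_R$ via Hardy's inequality). Either can be made to work since $\ab\ge\varepsilon$ makes every weight bounded; the paper's approach avoids having to check convergence of the nonlinear expressions term by term, while yours is conceptually simpler once the smooth identity is in hand. But in both cases the substantive computation is the weighted Bochner step above, and that is what your proposal is missing.
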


\begin{proof}
	We will use the short-hand
	$$
	\phi := \frac{\psi}{\ab^{d/(d-1)}} \,.
	$$
	We split the proof into several steps. The starting point of the proof is the following formula, which follows from the properties of the $\gamma$ matrices,
	\begin{equation}
		\label{eq:penroseformula}
		\sum_{j=1}^d \left| \left[-i\partial_j - \frac 1d \gamma_j \gamma \cdot (-i\nabla)\right] \phi \right|^2  = \left|\nabla  \phi \right|^2 - \frac1d \left|\gamma \cdot (-i\nabla) \phi \right|^2.
	\end{equation}
	In the first two steps, we prove pointwise formulas for the two terms on the right side, multiplied by $\ab^2$. In Step 4, which is based on some preparations in Step 3, we will prove an integral formula, which will allow us in Step 5 to conclude the proof of the proposition.	
	
	\medskip
	
	\emph{Step 1.} 
	We claim that
	\begin{align}\label{comp1}
		\left|\nabla \phi \right|^2 \ab^2
		= \frac1{\ab^{2/(d-1)}}\, |\nabla \psi|^2 + \left|\nabla \ab^{\frac{d-2}{d-1}} \right|^2 \left[  \left(\frac{d}{d-2}\right)^2 \frac{|\psi|^2}{\ab^2} - \frac{2d(d-1)}{(d-2)^2}  \right].
	\end{align}
	
	To prove this, we differentiate $\phi$ using the chain rule for weakly differentiable functions as in \cite[Theorem 6.16]{LiLo} and obtain
	\begin{equation}
		\label{eq:derphi}
			\nabla \phi = \ab^{-d/(d-1)} \nabla \psi -\frac{d}{d-1}  \ab^{-(2d-1)/(d-1)}(\nabla \ab) \psi \,,
	\end{equation}
	so that
	\begin{align*}
		\left|\nabla \phi \right|^2\ab^2 & =  \ab^{-\frac{2d}{d-1}+2} |\nabla \psi|^2  + \left(\frac{d}{d-1}\right)^2 \ab^{-\frac{2(2d-1)}{d-1}+4} \left|\nabla \ab \right|^2 \ab^{-2} \left|\psi\right|^2 \\
		& \quad - \frac{2d}{d-1} \ab^{-\frac{3d-1}{d-1}+3}\,\ab^{-1} \nabla \ab \cdot \re \langle \psi, \nabla \psi \rangle \,.
	\end{align*}
	Using $\ab \nabla \ab = |\psi|\nabla |\psi| = \re \langle \psi, \nabla \psi \rangle$, the last two terms simplify to
	\begin{align*}
		& \ab^{-\frac{2}{d-1}} \left| \nabla \ab \right|^2 \left[  \left(\frac{d}{d-1}\right)^2 \frac{|\psi|^2}{\ab^2} 
		- \frac{2d}{d-1}  \right] \\
		& =\left(\frac{d-1}{d-2}\right)^2 \left| \nabla \ab^{\frac{d-2}{d-1}} \right|^2 
		\left[  \left(\frac{d}{d-1}\right)^2 \frac{|\psi|^2}{\ab^2} 
		- \frac{2d}{d-1}  \right] \\
		& = \left| \nabla \ab^{\frac{d-2}{d-1}} \right|^2 \left[  \left(\frac{d}{d-2}\right)^2 \frac{|\psi|^2}{\ab^2} - \frac{2d(d-1)}{(d-2)^2}  \right].
	\end{align*}
	Combining the terms yields \eqref{comp1}, as claimed.
	
\medskip

\emph{Step 2.} 	
	We have that
	\begin{align}\label{comp2}
		\left| \gamma \cdot (-i \nabla ) \phi \right|^2\ab^2 & =
		\frac{1}{\ab^{2/(d-1)}} \left|\gamma \cdot \nabla \psi \right|^2
		+ \left(\frac{d}{d-2}\right)^2  \left|\nabla \ab^{\frac{d-2}{d-1}}\right|^2 \frac{ |\psi|^2}{\ab^2} \notag \\
		& \quad - \frac{2d}{d-1}  \frac{1}{\ab^{2/(d-1)+1}}\, \re \langle \gamma \cdot (\nabla \ab) \psi , \gamma \cdot \nabla \psi \rangle \,.
	\end{align}

	To prove this, we note that \eqref{eq:derphi} implies
	$$
	\gamma \cdot \nabla  \phi =  \ab^{-d/(d-1)} \gamma \cdot \nabla \psi - \frac{d}{d-1}  \ab^{-d/(d-1)-1} \gamma \cdot (\nabla \ab)\, \psi
	$$
	and, using the commutation relations of the $\gamma$ matrices, we find
	\begin{align*}
		\left|\gamma \cdot \nabla \phi \right|^2\ab^2 & = \frac{1}{\ab^{\frac{2d}{d-1}}} \left|\gamma \cdot \nabla \psi \right|^2 \ab^2
		+ \left(\frac{d}{d-1}\right)^2   \frac{1}{\ab^{\frac{2d}{d-1}}} \left|\nabla \ab\right|^2 |\psi|^2 \\
		& \quad -  \frac{2d}{d-1}  \frac{1}{\ab^{\frac{2d}{d-1}-1}} \re \langle \gamma \cdot \nabla \psi, \gamma \cdot (\nabla \ab) \psi\rangle \\
		& = \frac{1}{\ab^{2/(d-1)}} \left|\gamma \cdot \nabla \psi \right|^2
		+ \left(\frac{d}{d-2}\right)^2  \left| \nabla \ab^{\frac{d-2}{d-1}} \right|^2 \frac{ |\psi|^2}{\ab^2} \\
		& -  \frac{2d}{d-1}  \frac{1}{\ab^{2/(d-1)+1}} {\rm Re} \langle \gamma \cdot \nabla \psi, \gamma \cdot (\nabla \ab) \psi\rangle \,.
	\end{align*}
	This proves \eqref{comp2}.
	
\medskip

\emph{Step 3.} We show that, if $\chi \in C^\infty_c(\R^d)$, then
	\begin{align}\label{eq:step3}
		\int_{\R^d}  \ab^{-\frac{2}{d-1}} |\nabla \psi|^2 \chi\, dx & = \frac{2(d-1)}{(d-2)^2} \int_{\R^d} \left|\nabla \ab^{\frac{d-2}{d-1}} \right|^2 \chi \,dx \notag \\
		& \quad	+ \int_{\R^d} \ab^{-\frac{2}{d-1}} \left|\gamma \cdot \nabla \psi \right|^2 \chi\, dx \notag \\
		& \quad - \frac{2}{d-1}\int_{\R^d} \ab^{-1- \frac2{d-1}} \re \langle \gamma \cdot (\nabla \ab) \psi , \gamma\cdot\nabla \psi \rangle \, \chi\,dx \notag \\
		& \quad +\sum_{j, k =1,\, j\not=k}^d \int_{\R^d}  \ab^{-\frac{2}{d-1}} {\rm Re}\langle \gamma_j \psi , \gamma_k \partial_k \psi)\rangle \partial_j \chi \,dx \,.
	\end{align}
	
	To prove this, as a preliminary step, we show that for any bounded, compactly supported function $f$ with $\nabla f\in L^d(\R^d)$ and any $j\neq k$, one has
	\begin{align}\label{eq:step3eq}
		& \int_{\R^d} f \langle \partial_k \psi, \gamma_k \gamma_j \partial_j \psi \rangle \,dx
		+ \int_{\R^d} f \langle \partial_j \psi, \gamma_j \gamma_k \partial_k \psi \rangle \,dx \notag \\
		& = - \int_{\R^d} (\partial_k f) \langle \psi, \gamma_k \gamma_j \partial_j \psi \rangle\, dx - \int_{\R^d} (\partial_j f) \langle \psi, \gamma_j \gamma_k \partial_k \psi \rangle\, dx \,.
	\end{align}
	Since $C^\infty_c(\R^d)$ is dense in $\dot H^1(\R^d)$ (by multiplying by a smooth cut-off function and mollifying), it suffices to prove \eqref{eq:step3eq} for $\psi\in C^\infty_c(\R^d)$. Here we also use that, by Sobolev's inequality, $\psi\in L^{\frac{2d}{d-2}}$, so $(\nabla f)\psi \in L^2$.

	For $\psi\in C^\infty_c(\R^d)$, we integrate by parts in both terms on the left side of \eqref{eq:step3eq} and find
	\begin{align*}
		\int_{\R^d} f \langle \partial_k \psi, \gamma_k \gamma_j \partial_j \psi \rangle \,dx
		& = - \int_{\R^d} (\partial_k f) \langle \psi, \gamma_k \gamma_j \partial_j \psi \rangle \,dx
		- \int_{\R^d} f \langle \psi, \gamma_k \gamma_j \partial_k \partial_j \psi \rangle \,dx \,, \\
		\int_{\R^d} f \langle \partial_j \psi, \gamma_j \gamma_k \partial_k \psi \rangle \,dx
		& = - \int_{\R^d} (\partial_j f) \langle  \psi, \gamma_j \gamma_k \partial_k \psi \rangle \,dx
		- \int_{\R^d} f \langle \psi, \gamma_j \gamma_k \partial_j \partial_k \psi \rangle \,dx \,.
	\end{align*}
	Summing these two equations and using the anticommutation relations to cancel the last term, we obtain \eqref{eq:step3eq}.

	Let us turn to the proof of \eqref{eq:step3}. We may assume that $\phi$ is real-valued. With $f=\chi \ab^{-\frac{2}{d-1}}$, we have
	\begin{align*}
		& \int_{\R^d} \ab^{-\frac{2}{d-1}} \left| \nabla \psi \right|^2 \chi\, dx - \int_{\R^d} \ab^{-\frac{2}{d-1}} \left|\gamma \cdot \nabla \psi \right|^2 \chi\, dx \\
		& = - \sum_{j<k} \int_{\R^d} f \left( \langle \partial_k \psi, \gamma_k \gamma_j \partial_j \psi \rangle + \langle \partial_j \psi, \gamma_j \gamma_k \partial_k \psi \rangle \right) dx \\
		& = \sum_{j<k} \int_{\R^d} \left( (\partial_k f) \langle \psi, \gamma_k \gamma_j \partial_j \psi \rangle + (\partial_j f) \langle \psi, \gamma_j \gamma_k \partial_k \psi \rangle\right) dx \\
		& = \sum_{j,k=1,\, j\not=k}^d  \int_{\R^d} (\partial_j f) \langle \psi, \gamma_j \gamma_k \partial_k \psi \rangle \, dx \,.
	\end{align*}
	We now insert
	$$
	\partial_j f = -\frac{2}{d-1} \ab^{-\frac{2}{d-1}-1} \chi \partial_j \ab + \ab^{-\frac{2}{d-1}}\partial_j \chi
	$$
	(which also implies $\nabla f\in L^d$). After taking the real part, the term involving $\partial_j\chi$ leads to the last term in \eqref{eq:step3}. For the term involving $\partial_j\ab$ we note
	\begin{align*}
		& \sum_{j,k=1,\, j\not=k}^d  \int_{\R^d} \ab^{-\frac{2}{d-1}-1}\partial_j\ab \re \langle \psi, \gamma_j \gamma_k \partial_k \psi \rangle \, \chi\,dx \\
		& = \int_{\R^d} \! \ab^{-\frac{2}{d-1}-1} \! \re \langle \gamma\cdot(\nabla\ab) \psi, \gamma\cdot\nabla \psi \rangle \, \chi\,dx - \int_{\R^d} \! \ab^{-\frac{2}{d-1}-1} \! \left( \nabla \ab\right) \cdot \re \langle \psi, \nabla \psi \rangle \, \chi\,dx
	\end{align*}
	and, using again $\ab\nabla\ab = \re\langle\psi,\nabla\psi\rangle$, we write
	$$
	\ab^{-\frac{2}{d-1}-1} \left( \nabla \ab\right) \cdot \re \langle \psi, \nabla \psi \rangle 
	= \ab^{-\frac{2}{d-1}} |\nabla\ab|^2 = \left( \frac{d-1}{d-2}\right)^2 \left| \nabla\ab^\frac{d-2}{d-1} \right|^2.
	$$
	In this way, we arrive at \eqref{eq:step3}.
	
\medskip

\emph{Step 4.}
	We claim that
	\begin{align}\label{comp3}
		\int_{\R^d}  \ab^{-\frac{2}{d-1}} \left|\nabla \psi\right|^2 dx & = \frac{2(d-1)}{(d-2)^2} \int_{\R^d} \left|\nabla \ab^{\frac{d-2}{d-1}} \right|^2 dx \notag \\
		& \quad + \int_{\R^d} \ab^{-\frac{2}{d-1}} \left|\gamma \cdot \nabla \psi \right|^2 dx \notag \\
		& \quad - \frac{2}{d-1} \int_{\R^d} \ab^{-1-2/(d-1)}  \re \langle \gamma \cdot (\nabla\ab) \psi, \gamma\cdot\nabla \psi \rangle \,dx \,.
	\end{align}	
	
	Choose $\chi\in C^\infty_c(\R^d)$ be equal to one near the origin and apply the equality in Step 3 with $\chi_R(x) := \Phi(x/R)$. Since $\nabla\psi\in L^2$ and $|\psi|_\epsilon\geq \epsilon$ we have
	\begin{align*}
		& \lim_{R\to\infty} \int_{\R^d}  \ab^{-\frac{2}{d-1}} |\nabla \psi|^2 \chi_R \, dx = \int_{\R^d}  \ab^{-\frac{2}{d-1}} |\nabla \psi|^2\, dx \,, \\
		& \lim_{R\to\infty} \int_{\R^d} \ab^{-\frac{2}{d-1}} \left|\gamma \cdot \nabla \psi \right|^2 \chi_R \, dx = \int_{\R^d} \ab^{-\frac{2}{d-1}} \left|\gamma \cdot \nabla \psi \right|^2 \, dx \,, \\
		& \lim_{R\to\infty} \int_{\R^d} \ab^{-1-\frac2{d-1}} \re \langle \gamma \cdot (\nabla\ab) \psi , \gamma\cdot\nabla \psi \rangle \chi_R \,dx \\
		& \qquad = \int_{\R^d} \ab^{-1-\frac2{d-1}}  \re \langle  \gamma \cdot (\nabla\ab) \psi, \gamma\cdot\nabla \psi \rangle \,dx \,.
	\end{align*}
	Moreover, if $\Phi$ is chosen radially nonincreasing, then, by monotone convergence,
	$$
	\lim_{R\to\infty} \int_{\R^d} \left|\nabla \ab^{\frac{d-2}{d-1}} \right|^2 \chi_R \,dx = \int_{\R^d} \left|\nabla \ab^{\frac{d-2}{d-1}} \right|^2 dx \,.
	$$
	Thus, to complete the proof, we need to show that for $j\neq k$,
	$$
	\lim_{R\to\infty} \int_{\R^d}  \ab^{-\frac{2}{d-1}} {\rm Re}\langle \gamma_j \psi , \gamma_k \partial_k \psi)\rangle \partial_j \chi_R \,dx = 0 \,.
	$$
	To prove this, we bound $|\partial_j\chi_R|\leq \const |x|^{-1}\1_{\{|x|\geq cR\}}$, where $\Phi\equiv 1$ on $\{|x|\leq c\}$. By Hardy's inequality, $|x|^{-1}\psi\in L^2$. This, together with $\nabla\psi\in L^2$ and $\ab\geq\epsilon$, implies the claimed limit by dominated convergence. This completes the proof of \eqref{comp3}.

\medskip

\emph{Step 5.} We now conclude the proof of the proposition.
	Inserting \eqref{comp1} and \eqref{comp2} into \eqref{eq:penroseformula}, we obtain
	\begin{align*}
		& \sum_{j=1}^d \left| \left[-i\partial_j - \frac 1d \gamma_j \gamma \cdot (-i\nabla)\right] \phi \right|^2 \ab^2 \\
		& = \frac1{\ab^{2/(d-1)}} |\nabla \psi|^2 + \left|\nabla \ab^{\frac{d-2}{d-1}} \right|^2 \left[  \left(\frac{d}{d-2}\right)^2 \frac{|\psi|^2}{\ab^2} - \frac{2d(d-1)}{(d-2)^2}  \right] \\
		& \quad - \frac1d \frac{1}{\ab^{2/(d-1)}} \left|\gamma \cdot \nabla \psi \right|^2 
		- \frac1d \left(\frac{d}{d-2}\right)^2  \left|\nabla \ab^{\frac{d-2}{d-1}}\right|^2 \frac{ |\psi|^2}{\ab^2} \notag \\
		& \quad + \frac{2}{d-1}  \frac{1}{\ab^{2/(d-1)+1}}\, \re \langle \gamma \cdot (\nabla \ab) \psi , \gamma \cdot \nabla \psi \rangle \,.
	\end{align*}
	We integrate this formula over $\R^d$ and use \eqref{comp3} to express the integral of
	the first and last term on the right side in terms of integrals involving $|\nabla \ab^{\frac{d-2}{d-1}} |^2$ and $|\gamma\cdot\nabla\psi|^2$. Collecting terms, we arrive at the claimed identity in the proposition.
\end{proof}


\section{Proof of the inequality}

In this short section, we deduce Theorem \ref{main} from Proposition \ref{identity}. Let $(\psi,A)$ be a solution of \eqref{eq:eq} satisfying $\psi\in L^p(\R^d,\C^N)$ for some $d/(d-1)<p<\infty$ and $A\in L^d(\R^d,\R^d)$. Then, as shown in \cite{FrLo1}, $\psi\in L^{\frac{2d}{d-2}}(\R^d,\C^N)$. Since $A\in L^d(\R^d,\R^d)$, we deduce from H\"older's inequality that $\gamma\cdot A \psi\in L^2(\R^d,\R^d)$. Thus, by \eqref{eq:eq}, $\gamma\cdot(-i\nabla)\psi \in L^2(\R^d)$ and, consequently, $\psi\in\dot H^1(\R^d)$. Therefore, we can apply Proposition \ref{identity}. Dropping the nonnegative term on the left side and using $|\psi|\leq\ab$ on the right side, we obtain
$$
\frac{d-1}{d-2} \int_{\R^d} \left| \nabla \ab^{\frac{d-2}{d-1}}\right|^2\,dx \leq \frac{d-1}{d} \int_{\R^d} \frac{|\gamma\cdot\nabla\psi|^2}{\ab^{2/(d-1)}}\,dx = \frac{d-1}{d} \int_{\R^d} \frac{|A|^2 |\psi|^2}{\ab^{2/(d-1)}}\,dx \,.
$$
We bound the left side from below with Sobolev's inequality,
$$
\int_{\R^d} \left| \nabla \ab^{\frac{d-2}{d-1}}\right|^2\,dx
\geq S_d \left( \int_{\R^d} \left( \ab^{\frac{d-2}{d-1}} - \epsilon^{\frac{d-2}{d-1}} \right)^{\frac{2d}{d-2}}dx \right)^{\frac{d-2}{d}},
$$
and the right side from above with H\"older's inequality,
$$
\int_{\R^d} \frac{|A|^2 |\psi|^2}{\ab^{2/(d-1)}}\,dx \leq \int_{\R^d} |A|^2 |\psi|^{\frac{2(d-2)}{d-1}} \,dx
\leq \|A\|_{L^d}^2 \left( \int_{\R^d} |\psi|^{\frac{2d}{d-1}}\,dx \right)^{\frac{d-2}{d}}.
$$
Thus, we obtain
$$
\frac{S_d}{d-2} \left( \int_{\R^d} \left( \ab^{\frac{d-2}{d-1}} - \epsilon^{\frac{d-2}{d-1}} \right)^{\frac{2d}{d-2}}dx \right)^{\frac{d-2}{d}}
\leq  \frac{\|A\|_{L^d}^2}{d} \left( \int_{\R^d} |\psi|^{\frac{2d}{d-1}}\,dx \right)^{\frac{d-2}{d}}.
$$
We now let $\epsilon\to 0$. Since $\epsilon\mapsto \ab^{\frac{d-2}{d-1}} - \epsilon^{\frac{d-2}{d-1}}$ is pointwise nonincreasing, we can use monotone convergence and obtain
$$
\frac{S_d}{d-2} \left( \int_{\R^d} |\psi|^{\frac{2d}{d-1}}\,dx \right)^{\frac{d-2}{d}}
\leq  \frac{\|A\|_{L^d}^2}{d} \left( \int_{\R^d} |\psi|^{\frac{2d}{d-1}}\,dx \right)^{\frac{d-2}{d}}.
$$
Since $\psi\not\equiv 0$, we obtain the claimed lower bound on $\|A\|_d^2$. This concludes the proof.


\section{Characterizing cases of equality. I}\label{sec:equality1}

We now investigate the cases of equality in the bound in Theorem \ref{main}. In this section, as a first step, we discuss the absolute value of $\psi$ and $A$. We shall prove the following result.

\begin{proposition}\label{equality1}
	Let $d\geq 3$. If $\psi\in L^p(\R^d,\C^N)$ for some $d/(d-1)<p<\infty$ is a nontrivial solution of \eqref{eq:eq} with
	$$
	\|A\|_{L^d}^2 = \frac{d}{d-2}\, S_d \,,
	$$
	then there are $a\in\R^d$, $b>0,c>0$ such that, for all $x\in\R^d$,
	$$
	|\psi(x)| = c \left( \frac{b^2}{b^2+|x-a|^2} \right)^\frac{d-1}{2}
	\qquad\text{and}\qquad
	|A(x)| = d\ \frac{b}{b^2+|x-a|^2} \,.
	$$
	Moreover,
	\begin{equation}
		\label{eq:equalitytwistor}
		\left[-i\partial_j - \frac 1d \gamma_j \gamma \cdot (-i\nabla)\right] \frac{\psi}{|\psi|^{d/(d-1)}} \equiv 0
		\qquad\text{for all}\ j=1,\ldots, d	\,.
	\end{equation}
\end{proposition}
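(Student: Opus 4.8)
The plan is to trace through the chain of inequalities in the proof of Theorem~\ref{main} (the previous section) and argue that each one must be an equality, then extract the claimed pointwise information. Recall that the argument there passed through Proposition~\ref{identity}: for every $\epsilon>0$ one has
$$
\int_{\R^d} \sum_{j=1}^d \Big| \left[-i\partial_j - \tfrac1d \gamma_j\gamma\cdot(-i\nabla)\right]\frac{\psi}{\ab^{d/(d-1)}}\Big|^2 \ab^2\,dx
+ \frac{d-1}{(d-2)^2}\int_{\R^d}\left|\nabla\ab^{\frac{d-2}{d-1}}\right|^2\Big[2(d-1)-d\tfrac{|\psi|^2}{\ab^2}\Big]dx
= \frac{d-1}{d}\int_{\R^d}\frac{|A|^2|\psi|^2}{\ab^{2/(d-1)}}\,dx,
$$
and this was combined with (i) dropping the first term on the left, (ii) the bound $|\psi|\le\ab$ used to pass from the bracket to $2(d-1)-d = d-2$, hence from $\frac{d-1}{(d-2)^2}(d-2) = \frac{d-1}{d-2}$, (iii) the Sobolev inequality for $\ab^{(d-2)/(d-1)}-\epsilon^{(d-2)/(d-1)}$, (iv) the pointwise bound $\ab^{-2/(d-1)}\le|\psi|^{-2/(d-1)}$, i.e. $\ab^{-2/(d-1)}|\psi|^2 \le |\psi|^{2(d-2)/(d-1)}$, and (v) Hölder's inequality. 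First I would revisit the $\epsilon\to 0$ limit and show that, since equality holds in the end, no slack can survive; concretely, sending $\epsilon\to 0$ in the identity above, monotone/dominated convergence (justified exactly as in Proposition~\ref{identity}, Step~4, and in the proof of Theorem~\ref{main}) gives the clean $\epsilon=0$ identity
$$
\int_{\R^d} \sum_{j=1}^d \Big| \left[-i\partial_j - \tfrac1d \gamma_j\gamma\cdot(-i\nabla)\right]\frac{\psi}{|\psi|^{d/(d-1)}}\Big|^2 |\psi|^2\,dx
+ \frac{d-1}{d-2}\int_{\R^d}\left|\nabla|\psi|^{\frac{d-2}{d-1}}\right|^2 dx
= \frac{d-1}{d}\int_{\R^d}|A|^2|\psi|^{\frac{2(d-2)}{d-1}}\,dx,
$$
on the set $\{\psi\ne0\}$ (the integrand of the twistor term being interpreted as $0$ where $\psi=0$). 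Here one must be a little careful: the first term at $\epsilon=0$ should be obtained as a liminf via Fatou, which is all that is needed.

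Next I would combine this with the two remaining inequalities and force equality in each. From the chain in Section~3, equality in the final bound $\|A\|_{L^d}^2 = \frac{d}{d-2}S_d$ forces: (a) the twistor term vanishes, which is exactly \eqref{eq:equalitytwistor}; (b) equality in Sobolev's inequality for the function $w:=|\psi|^{(d-2)/(d-1)}$, so that $w$ is (a translate/dilate of) the Aubin--Talenti profile $w(x) = c'\,(b^2+|x-a|^2)^{-(d-2)/2}$ up to the usual normalization, which upon raising to the power $(d-1)/(d-2)$ gives
$$
|\psi(x)| = c\left(\frac{b^2}{b^2+|x-a|^2}\right)^{\frac{d-1}{2}};
$$
and (c) equality in Hölder's inequality $\int |A|^2|\psi|^{2(d-2)/(d-1)} \le \|A\|_{L^d}^2 \big(\int|\psi|^{2d/(d-1)}\big)^{(d-2)/d}$, which means $|A|^d$ is proportional to $|\psi|^{2d/(d-1)}$ a.e., i.e. $|A(x)| = \kappa\,\big(b^2/(b^2+|x-a|^2)\big)$ for some constant $\kappa\ge0$.

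It remains to pin down $\kappa = d$. For this I would plug the now-explicit $|\psi|$ back into the $\epsilon=0$ identity with the twistor term set to zero, giving
$$
\frac{d-1}{d-2}\int_{\R^d}\left|\nabla|\psi|^{\frac{d-2}{d-1}}\right|^2 dx = \frac{d-1}{d}\int_{\R^d}|A|^2|\psi|^{\frac{2(d-2)}{d-1}}\,dx,
$$
and evaluate both sides with $|\psi|^{(d-2)/(d-1)} = c^{(d-2)/(d-1)}\big(b^2/(b^2+|x-a|^2)\big)^{(d-2)/2}$ and $|A| = \kappa\,b^2/(b^2+|x-a|^2)$. The left side is $\frac{d-1}{d-2}$ times the Sobolev-optimal Dirichlet energy of the Aubin--Talenti profile, which by the sharp Sobolev inequality equals $S_d$ times the corresponding $L^{2d/(d-2)}$-norm to the power $\frac{d-2}{d}$; the right side is $\frac{d-1}{d}\kappa^2$ times the same $L^{2d/(d-2)}$-quantity (after the Hölder-equality relation is used, both sides carry the same integral factor $\int (b^2/(b^2+|x-a|^2))^d dx$). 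Matching the constants and using $S_d = \frac{d(d-2)}{4}|\Sph^d|^{2/d}$ together with $\int_{\R^d}(1+|x|^2)^{-d}dx = 2^{-1}|\Sph^d|$ forces $\kappa^2 = d^2$, hence $\kappa = d$, which is exactly the asserted formula for $|A|$.

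The main obstacle I anticipate is purely technical rather than conceptual: justifying the passage to $\epsilon=0$ in the twistor term and the harmless-looking interchange of limits, given that $\psi$ need not be continuous and $\{\psi=0\}$ may be large. The key points are that $|\psi|_\epsilon \downarrow |\psi|$ and $|\nabla|\psi|_\epsilon^{(d-2)/(d-1)}|$ is controlled by $|\nabla|\psi||\,|\psi|^{-1/(d-1)}$ uniformly (so monotone convergence and Fatou apply), and that the equality cases of the sharp Sobolev inequality are already classified (Aubin--Talenti), so that once equality is forced there is genuinely no freedom beyond the parameters $a,b,c$. Everything else is bookkeeping of the constants.
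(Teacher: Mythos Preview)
Your strategy—trace each inequality from the proof of Theorem~\ref{main} and force equality—is exactly the paper's. The execution differs in two respects worth noting.

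First, the paper never writes an ``$\epsilon=0$ identity.'' It keeps $\epsilon>0$ throughout, groups the slack into four nonnegative pieces $R_\epsilon,P_\epsilon,R_\epsilon^{(1)},R_\epsilon^{(2)}$ summing to a quantity $S_\epsilon\to 0$, and kills each piece separately. This is not cosmetic: your passage to $\epsilon=0$ presupposes that $|\psi|^{(d-2)/(d-1)}\in\dot H^1(\R^d)$, which is not known a priori (your claimed pointwise majorant $|\nabla|\psi||\,|\psi|^{-1/(d-1)}$ is not known to lie in $L^2$). The paper instead extracts weak differentiability from the uniform bound on $\int|\nabla\ab^{(d-2)/(d-1)}|^2\,dx$ via a weak-compactness argument (Lemma~\ref{weakdiff}). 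More importantly, the paper handles the Sobolev term $R_\epsilon^{(2)}$ \emph{before} the twistor term $P_\epsilon$: once the Aubin--Talenti form of $|\psi|$ is established one knows $|\psi|>0$ everywhere, and only then is $\psi/|\psi|^{d/(d-1)}$ globally defined so that \eqref{eq:equalitytwistor} makes sense. Your ordering (a) before (b) runs into exactly the definitional issue you flag for $\{\psi=0\}$, and declaring the integrand to be zero there is not the same as showing the distributional equation holds.

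Second, the constant in $|A|$ falls out immediately from the hypothesis $\|A\|_{L^d}^2=\tfrac{d}{d-2}S_d$: once H\"older equality gives $|A|=\kappa\,b^2/(b^2+|x-a|^2)$, computing $\|A\|_{L^d}$ as in \eqref{eq:optconstcomp} fixes $\kappa=d/b$, i.e.\ $|A(x)|=d\,b/(b^2+|x-a|^2)$. Plugging back into the identity to recover $\kappa$ is correct but unnecessary (and note your answer $\kappa=d$ drops a factor of $b$).
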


We prove this proposition by rewriting the proof in the previous section, keeping track of all the nonnegative terms that we dropped in that argument.

\begin{proof}
	Let us abbreviate
$$
P_\epsilon := \int_{\R^d} \sum_{j=1}^d \Big | \left[-i\partial_j - \frac 1d \gamma_j \gamma \cdot (-i\nabla)\right] \frac{\psi}{\ab^{d/(d-1)}}\Big |^2 \ab^2 \,dx
$$
and
$$
R_\epsilon := 
\frac{d(d-1)}{(d-2)^2}  \int_{\R^d} |\nabla \ab^{\frac{d-2}{d-1}}|^2 \frac{ \epsilon^2}{\ab^2}\, dx \,.
$$
Then the identity in Proposition \ref{identity} can be written as
$$
R_\epsilon + P_\epsilon = \frac{d-1}{d} \int_{\R^d} \frac{|\gamma \cdot \nabla \psi|^2}{\ab^{2/(d-1)}}dx - \frac{d-1}{d-2}
\int_{\R^d} \left|\nabla \ab^{\frac{d-2}{d-1}} \right|^2 \, dx \,.
$$
From equation \eqref{eq:eq}, we get
$$
R_\epsilon + P_\epsilon =  \frac{d-1}{d} \int_{\R^d} \frac{|A|^2 |\psi|^2}{\ab^{2/(d-1)}}dx - \frac{d-1}{d-2}
\int_{\R^d} \left|\nabla \ab^{\frac{d-2}{d-1}} \right|^2 \, dx \,.
$$
We want to apply the H\"older and Sobolev inequality to the two terms on the right side, respectively. We therefore write
$$
R_\epsilon + P_\epsilon + R_\epsilon^{(1)} + R_\epsilon^{(2)} = S_\epsilon \,,
$$
where
$$
R_\epsilon^{(1)} := \frac{d-1}{d} \left( \|A\|_{L^d}^2 \left( \int_{\R^d} |\psi|^{\frac{2d}{d-2}} \ab^{-\frac{2d}{(d-1)(d-2)}}\,dx \right)^{\frac{d-2}{d}} - \int_{\R^d} \frac{|A|^2 |\psi|^2}{\ab^{2/(d-1)}}dx \right),
$$
$$
R_\epsilon^{(2)}:= \frac{d-1}{d-2} \left( 
\int_{\R^d} \left|\nabla \ab^{\frac{d-2}{d-1}} \right|^2 \, dx
- S_d \left( \int_{\R^d} \left( \ab^{\frac{d-2}{d-1}} - \epsilon^{\frac{d-2}{d-1}} \right)^{\frac{2d}{d-2}} dx \right)^{\frac{d-2}{d}} \right)
$$
and
\begin{align*}
	S_\epsilon & := \frac{d-1}{d} \|A\|_{L^d}^2 \left( \int_{\R^d} |\psi|^{\frac{2d}{d-2}} \ab^{-\frac{2d}{(d-1)(d-2)}} \,dx \right)^{\frac{d-2}{d}} \\
	& \quad - \frac{d-1}{d-2} S_d \left( \int_{\R^d} \left( \ab^{\frac{d-2}{d-1}} - \epsilon^{\frac{d-2}{d-1}} \right)^{\frac{2d}{d-2}} dx \right)^{\frac{d-2}{d}}.
\end{align*}
By monotone convergence, together with the fact that $\psi\in L^{\frac{2d}{d-1}}$, it is easy to see that
$$
\lim_{\epsilon\to 0} S_\epsilon = \left( \frac{d-1}{d} \|A\|_{L^d}^2 - \frac{d-1}{d-2}\, S_d \right) \left( \int_{\R^d} |\psi|^{\frac{2d}{d-1}} \, dx \right)^{\frac{d-2}{d}}.
$$
On the other hand, since each one of the terms $R_\epsilon$, $P_\epsilon$, $R_\epsilon^{(1)}$ and $R_\epsilon^{(2)}$ is nonnegative, we have $S_\epsilon\geq 0$. Since $\psi\not\equiv 0$, we conclude again that
$$
\| A\|_{L^d}^2 \geq \frac{d}{d-2}\, S_d \,,
$$
which is the bound we derived in the previous subsection.

\medskip

Now assume that
$$
\| A\|_{L^d}^2 = \frac{d}{d-2}\, S_d \,.
$$
Then, by the above argument, $\lim_{\epsilon\to 0} S_\epsilon=0$ and, consequently,
\begin{equation}
	\label{eq:equalitylim}
	\lim_{\epsilon\to 0} R_\epsilon = \lim_{\epsilon\to 0} P_\epsilon = \lim_{\epsilon\to 0} R_\epsilon^{(1)} = \lim_{\epsilon\to 0} R_\epsilon^{(2)} = 0 \,.
\end{equation}
(The existence of these four limits is part of the conclusion.) 

\medskip

Let us begin with the term $R_\epsilon^{(1)}$. Using monotone convergence, we find that
$$
\lim_{\epsilon\to 0} R_\epsilon^{(1)} = \frac{d-1}{d} \left(  \|A\|_{L^d}^2 \left( \int_{\R^d} |\psi|^{\frac{2d}{d-1}} \,dx \right)^{\frac{d-2}{d}} - \int_{\R^d} |A|^2 |\psi|^{\frac{2(d-2)}{d-1}} \, dx \right).
$$
Thus, from \eqref{eq:equalitylim} we conclude that
$$
 \|A\|_{L^d}^2 \left( \int_{\R^d} |\psi|^{\frac{2d}{d-1}} \,dx \right)^{\frac{d-2}{d}} = \int_{\R^d} |A|^2 |\psi|^{\frac{2(d-2)}{d-1}} \, dx
$$
and, therefore, by the characterization of equality in H\"older's inequality,
\begin{equation}
	\label{eq:equalityaabs}
	|A| = \const |\psi|^{\frac{2}{d-1}}
\end{equation}
for some positive constant.

\medskip

Next, we consider $R_\epsilon^{(2)}$. We note that $\ab^{\frac{d-2}{d-1}} - \epsilon^{\frac{d-2}{d-1}}$ converges pointwise monotonically to $|\psi|^{\frac{d-2}{d-1}}$ as $\epsilon\to 0$. By monotone convergence,
$$
\lim_{\epsilon\to 0} \int_{\R^d} \left( \ab^{\frac{d-2}{d-1}} - \epsilon^{\frac{d-2}{d-1}} \right)^{\frac{2d}{d-2}} dx = \int_{\R^d} |\psi|^{\frac{2d}{d-1}} \,dx \,.
$$
This, together with the fact that $R_\epsilon^{(2)}$ tends to zero (by \eqref{eq:equalitylim}) and therefore, in particular, remains bounded, implies that $\int_{\R^d} |\nabla \ab^{\frac{d-2}{d-1}}|^2 \, dx$ remains bounded. Moreover, by monotone convergence, $\ab^{\frac{d-2}{d-1}}\to |\psi|^{\frac{d-2}{d-1}}$ in $L^1_\loc(\R^d)$. By a simple argument (see Lemma \ref{weakdiff} below), these facts imply that $|\psi|^{\frac{d-2}{d-1}}$ is weakly differentiable in $\R^d$ and
$$
\int_{\R^d} \left|\nabla |\psi|^{\frac{d-2}{d-1}}\right|^2\,dx \leq \liminf_{\epsilon\to 0} \int_{\R^d} \left| \nabla \ab^{\frac{d-2}{d-1}} \right|^2 \, dx \,.
$$
We conclude that
$$
\liminf_{\epsilon\to 0} R_\epsilon^{(2)} \geq \frac{d-1}{d-2} \left( 
\int_{\R^d} |\nabla |\psi|^{\frac{d-2}{d-1}}|^2 \, dx
- S_d \left( \int_{\R^d} |\psi|^{\frac{2d}{d-1}} \,dx \right)^{\frac{d-2}{d}} \right).
$$
By \eqref{eq:equalitylim} and Sobolev's inequality we conclude that
$$
\int_{\R^d} \left|\nabla |\psi|^{\frac{d-2}{d-1}} \right|^2 \, dx
= S_d \left( \int_{\R^d} |\psi|^{\frac{2d}{d-1}} \,dx \right)^{\frac{d-2}{d}}.
$$
By the characterization of cases of equality in Sobolev's inequality (see, e.g., \cite[Theorem 8.3]{LiLo} for a textbook presentation), we have, for some $a\in\R^d$ and $b,c>0$,
$$
|\psi(x)|^{\frac{d-2}{d-1}} = c^{\frac{d-2}{d-1}} \left( \frac{b^2}{b^2 + |x-a|^2} \right)^{\frac{d-2}2}.
$$
This proves the form of $|\psi|$ stated in the proposition.

We draw one more conclusion, which we will not use, but which might be useful in another context. Namely, since the lower semicontinuity inequality for the weak convergence is saturated, the weak convergence is, in fact, strong convergence, that is,
$$
\nabla \ab^{\frac{d-2}{d-1}} \to \nabla |\psi|^{\frac{d-2}{d-1}}
\qquad\text{in}\ L^2(\R^d) \,.
$$

\medskip

Returning with the form of $|\psi|$ to \eqref{eq:equalityaabs}, we find that
$$
|A(x)| = \const \frac{b^2}{b^2+ |x-a|^2}
$$
with some positive constant. This constant can be determined in view of the computation in \eqref{eq:optconstcomp} and the assumption that $\|A\|_{L^2}^2 = (d/(d-2))S_d$. This yields the form of $|A|$ stated in the proposition.

\medskip

Finally, we consider the term $P_\epsilon$. Since we have already shown that $|\psi|$ is locally bounded away from zero, it is easy to see that $\psi \ab^{-d/(d-1)}\to \psi |\psi|^{-d/(d-1)}$ in $L^1_\loc(\Omega)$. Therefore, as in the lemma, the distribution
$$
\left[-i\partial_j - \frac 1d \gamma_j \gamma \cdot (-i\nabla)\right] \frac{\psi}{|\psi|^{d/(d-1)}}
$$
is an $L^2$ function and
\begin{align*}
	& \int_{\R^d} \Big | \left[-i\partial_j - \frac 1d \gamma_j \gamma \cdot (-i\nabla)\right] \frac{\psi}{|\psi|^{d/(d-1)}}\Big |^2 |\psi|^2 dx \\
	& \leq\liminf_{\epsilon\to 0} \int_{\R^d} \Big | \left[-i\partial_j - \frac 1d \gamma_j \gamma \cdot (-i\nabla)\right] \frac{\psi}{\ab^{d/(d-1)}}\Big |^2 \ab^2 dx \,.
\end{align*}
Thus,
$$
\int_{\R^d}\sum_{j=1}^d  \Big | \left[-i\partial_j - \frac 1d \gamma_j \gamma \cdot (-i\nabla)\right] \frac{\psi}{|\psi|^{d/(d-1)}}\Big |^2 |\psi|^2 dx \leq \liminf_{\epsilon\to 0} P_\epsilon \,.
$$
By \eqref{eq:equalitylim}, we conclude that
$$
\int_{\R^d}\sum_{j=1}^d  \Big | \left[-i\partial_j - \frac 1d \gamma_j \gamma \cdot (-i\nabla)\right] \frac{\psi}{|\psi|^{d/(d-1)}}\Big |^2 |\psi|^2 dx = 0 
$$
and, consequently, recalling also that $|\psi|\neq 0$, we obtain equation \eqref{eq:equalitytwistor}. This completes the proof of the proposition.
\end{proof}

In the previous proof, we used the following simple lemma.

\begin{lemma}\label{weakdiff}
	Let $\Omega\subset\R^d$ be open, let $f_n\in L^1_\loc(\Omega)$ be weakly differentiable in $\Omega$ and $f_n\to f$ in $L^1_\loc(\Omega)$. Assume that $(\nabla f_n)$ is bounded in $L^p(\Omega)$ for some $1<p<\infty$. Then $f$ is weakly differentiable in $\Omega$ and $(\nabla f_n)$ converges weakly to the weak gradient of $f$. In particular,
	$$
	\int_\Omega |\nabla f|^p\,dx \leq \liminf_{n\to\infty} \int_\Omega |\nabla f_n|^p\,dx \,.
	$$
\end{lemma}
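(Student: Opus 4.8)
The plan is to exploit the reflexivity of $L^p(\Omega,\R^d)$ for $1<p<\infty$ together with the continuity of distributional differentiation under weak convergence. First I would invoke the Banach--Alaoglu theorem in its reflexive-space form: since $(\nabla f_n)$ is bounded in the reflexive Banach space $L^p(\Omega,\R^d)$, there is a subsequence $(\nabla f_{n_k})$ and some $g\in L^p(\Omega,\R^d)$ with $\nabla f_{n_k}\rightharpoonup g$ weakly in $L^p(\Omega,\R^d)$.

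The next step is to identify $g$ with the distributional gradient of $f$. For any $\varphi\in C^\infty_c(\Omega)$ and any $j\in\{1,\dots,d\}$, the weak differentiability of $f_{n_k}$ gives
$$
\int_\Omega f_{n_k}\,\partial_j\varphi\,dx = -\int_\Omega (\partial_j f_{n_k})\,\varphi\,dx \,.
$$
On the left-hand side, $f_{n_k}\to f$ in $L^1_\loc(\Omega)$ while $\partial_j\varphi$ is bounded with compact support, so this converges to $\int_\Omega f\,\partial_j\varphi\,dx$. On the right-hand side, $\varphi\in L^{p'}(\Omega)$ because it is smooth and compactly supported (here it is essential that test functions lie in the dual space regardless of whether $\Omega$ is bounded), so the weak convergence $\partial_j f_{n_k}\rightharpoonup g_j$ in $L^p(\Omega)$ gives convergence to $\int_\Omega g_j\,\varphi\,dx$. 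Hence $\int_\Omega f\,\partial_j\varphi\,dx = -\int_\Omega g_j\,\varphi\,dx$ for every $\varphi\in C^\infty_c(\Omega)$, which is precisely the statement that $f$ is weakly differentiable with $\partial_j f = g_j$; that is, $\nabla f = g\in L^p(\Omega,\R^d)\subset L^1_\loc(\Omega,\R^d)$.

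Finally, since this identification of the weak limit is independent of the subsequence, the standard subsequence-of-every-subsequence argument promotes it to the full sequence: any subsequence of $(\nabla f_n)$ is itself bounded in $L^p$, hence has a further subsequence converging weakly, and the computation above forces the limit to be $\nabla f$; therefore $\nabla f_n\rightharpoonup\nabla f$ weakly in $L^p(\Omega,\R^d)$. The stated inequality then follows from weak lower semicontinuity of the $L^p$ norm (which one gets, e.g., by writing $\|h\|_{L^p(\Omega)}=\sup\{\int_\Omega h\cdot k\,dx : \|k\|_{L^{p'}(\Omega)}\le 1\}$ as a supremum of weakly continuous functionals, or from convexity plus strong continuity). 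There is no genuine obstacle in this argument; the only points deserving a word of care are passing to a subsequence before identifying the limit and the fact that $\Omega$ need not be bounded, both of which are handled by the hypothesis $1<p<\infty$ and the use of compactly supported test functions.
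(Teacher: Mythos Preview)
Your proof is correct and follows essentially the same approach as the paper's: extract a weakly convergent subsequence by reflexivity of $L^p$, identify the weak limit as $\nabla f$ via the definition of weak derivative and the $L^1_\loc$ convergence of $f_n$, and then use uniqueness of this limit to upgrade to convergence of the full sequence. The only differences are cosmetic (you spell out the lower semicontinuity of the norm and the subsequence-of-every-subsequence step more explicitly).
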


\begin{proof}
	Let $F$ be a weak limit point of $(\nabla f_n)$ in $L^p(\Omega)$. Such a weak limit point exists by weak compactness. Then, with limits taken along the corresponding subsequence, for any $\phi\in C^1_c(\Omega)$,
	$$
	\int_\Omega f\partial_k\phi\,dx = \lim_{n\to\infty} \int_\Omega f_n \partial_k\phi\,dx = - \lim_{n\to\infty} \int_\Omega \partial_k f_n \phi\,dx = \int_\Omega F_k \phi\,dx \,.
	$$
	This shows that $f$ is weakly differentiable with $\nabla f= F$. Since the weak gradient is unique, there is a unique weak limit point of $(\nabla f_n)$, so, $(\nabla f_n)$ converges weakly.
\end{proof}


\section{Twistor spinors}

In the previous section, we determined the absolute values of  $\psi$ and $A$ of extremal solutions of the inequality in Theorem \ref{main}. As a step towards determining the `argument' $\psi/|\psi|$, in this section, we will characterize all solutions of equation \eqref{eq:equalitytwistor}.

\begin{theorem}\label{twistor}
	Let $d\geq 3$ and assume that $\Phi$ is a spinor field on $\R^d$ satisfying
	\begin{equation}
		\label{eq:twistoreq}
		\left[-i\partial_j - \frac 1d \gamma_j \gamma \cdot (-i\nabla)\right] \Phi =0
		\qquad\text{for all}\ j=1,\ldots,d \,.
	\end{equation}
	Then there are constant spinors $\phi_0,\phi_1\in\C^N$ such that
	$$
	\Phi(x) = \phi_0 + \gamma\cdot x\, \phi_1 
	\qquad\text{for all}\ x\in\R^d \,.
	$$
\end{theorem}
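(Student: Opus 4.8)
The plan is to recognize equation \eqref{eq:twistoreq} as the twistor (Penrose) equation on $\R^d$ and to extract from it a second-order consequence that forces $\Phi$ to be affine in $x$. First I would rewrite \eqref{eq:twistoreq} in the cleaner form
$$
\partial_j \Phi = \tfrac 1d\, \gamma_j\, \gamma\cdot\nabla\Phi
\qquad\text{for all}\ j=1,\ldots,d \,,
$$
where I abbreviate $\gamma\cdot\nabla = \sum_k \gamma_k\partial_k$. Denote $\xi := \gamma\cdot\nabla\Phi$, so that the equation reads $\partial_j\Phi = \tfrac1d\gamma_j\xi$. Note that this is consistent: applying $\gamma\cdot\nabla$ and summing gives $\xi = \sum_j\gamma_j\partial_j\Phi = \tfrac1d\sum_j\gamma_j^2\,\xi = \xi$, using $\gamma_j^2=1$. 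Regularity is not an issue here because \eqref{eq:equalitytwistor}, from which this is applied, gives an $L^2$ equation and elliptic regularity (the operator being a constant-coefficient first-order operator whose symbol is injective, indeed $\Delta\Phi = 0$ as I show next) promotes $\Phi$ to a smooth — in fact harmonic, then polynomial — function; but for the statement as phrased one may simply assume $\Phi$ is, say, a distributional solution and bootstrap.

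The key computation is to differentiate once more. From $\partial_j\Phi = \tfrac1d\gamma_j\xi$ I get $\partial_k\partial_j\Phi = \tfrac1d\gamma_j\partial_k\xi$, and symmetry of second derivatives gives $\gamma_j\partial_k\xi = \gamma_k\partial_j\xi$ for all $j,k$. Multiplying by $\gamma_j$ (no sum) yields $\partial_k\xi = \gamma_j\gamma_k\partial_j\xi$ for each fixed $j\neq k$; averaging this identity over the $d-1$ values $j\neq k$ and adding the trivial $j=k$ term, together with $\sum_j \gamma_j\gamma_k\partial_j\xi = \gamma\cdot\nabla(\gamma_k \,\cdot)$-type bookkeeping, forces $\gamma\cdot\nabla\xi$ to be related to $\partial_k\xi$ in a way that, after one more contraction, gives $\partial_k\xi = 0$ for every $k$ when $d\geq 3$ (this is exactly where the hypothesis $d\geq 3$ enters — for $d=2$ the twistor equation has infinitely many holomorphic solutions). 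Concretely: from $\gamma_j\partial_k\xi = \gamma_k\partial_j\xi$, contract with $\gamma_k$ summed over $k$ to get $\gamma_j\,\gamma\cdot\nabla\xi = \sum_k\gamma_k\gamma_k\partial_j\xi \cdot$(after reindexing) and also contract the pair $(j,k)$ fully; comparing the two relations yields $(d-2)\,\gamma\cdot\nabla\xi = 0$ after using the anticommutation relations, hence $\gamma\cdot\nabla\xi=0$, and then feeding this back into $\gamma_j\partial_k\xi=\gamma_k\partial_j\xi$ and contracting once more gives $\partial_k\xi=0$ for all $k$. Therefore $\xi$ is a constant spinor; set $\xi =: \phi_1' $.

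Once $\xi$ is constant, the first-order equation becomes $\partial_j\Phi = \tfrac1d\gamma_j\xi$ with constant right-hand side, so $\Phi(x) = \Phi(0) + \tfrac1d\sum_j x_j\gamma_j\xi = \Phi(0) + \tfrac1d\,\gamma\cdot x\,\xi$. Setting $\phi_0 := \Phi(0)$ and $\phi_1 := \tfrac1d\xi$ gives exactly the claimed form $\Phi(x) = \phi_0 + \gamma\cdot x\,\phi_1$. I expect the main obstacle to be the purely algebraic juggling in the middle step: getting the constants right when contracting the identity $\gamma_j\partial_k\xi = \gamma_k\partial_j\xi$ with various $\gamma$'s, keeping careful track of which indices are summed and which are fixed, and verifying that the coefficient that must be inverted is precisely $d-2\neq 0$. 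A secondary, more routine point is justifying the differentiation under the weak-solution hypothesis; this can be handled by noting that \eqref{eq:twistoreq} implies $\Delta\Phi = \gamma\cdot\nabla(\gamma\cdot\nabla\Phi) - \sum_{j\neq k}\gamma_j\gamma_k\partial_j\partial_k\Phi$ combined with the equation to show $\Delta\Phi=0$ in the distributional sense, whence $\Phi$ is smooth (even real-analytic) by Weyl's lemma, after which all the above manipulations are legitimate.
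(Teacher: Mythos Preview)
Your approach is exactly the paper's: differentiate once more to obtain the symmetry $\gamma_j\partial_k\xi=\gamma_k\partial_j\xi$, contract using the Clifford relations to force $\partial_k\xi=0$ when $d\geq 3$, then integrate. The paper organizes the contraction slightly more cleanly by first taking the trace $j=k$ to get $(\gamma\cdot\nabla)\xi=\Delta\Phi=0$ (which also settles regularity via mollification and Weyl's lemma), after which left-multiplying the symmetry relation by $\gamma_j$ and summing over $j$ yields $(d-2)\,\partial_k\xi=-\gamma_k\,(\gamma\cdot\nabla)\xi=0$ in one step, sidestepping the index juggling you flagged.
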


This theorem is known. It appears, for instance, in \cite{Fr}. The equation for $\Phi$ is called the twistor equation and its solutions are called twistor spinors. We include the proof of the theorem for the sake of concreteness and since it simplifies considerably in the present Euclidean context.

\begin{proof}
	A priori, we only assume that $\Phi$ is a distribution that satisfies the equation in distributional sense. Then mollifications of $\Phi$ are smooth functions which satisfy the same twistor equation. Assuming the theorem has been proved for smooth functions, we conclude that each mollification has the form in the theorem with constant spinors $\phi_0$ and $\phi_1$ which depend on the mollification parameter. Since the mollifications converge to $\Phi$ in the sense of distributions as the mollification parameter vanishes, it is easy to see that the parameters $\phi_0$ and $\phi_1$ converge and, consequently, $\Phi$ has the claimed form.
	
	Thus, from now on, we may assume that $\Phi$ is a smooth function. (In fact, $C^2$ is enough.) We differentiate the equation in \eqref{eq:twistoreq} with respect to $x_k$ and obtain, abbreviating $D:=\gamma\cdot(-i\nabla)$,
	\begin{equation}
		\label{eq:twistoreqdiff}
		\partial_k\partial_j \Phi = \frac{i}{d} \gamma_j \partial_k D\Phi 
		\qquad\text{for all}\ j,k=1,\ldots,d \,.
	\end{equation}
	Taking $k=j$ and summing, we obtain
	$$
	\Delta \Phi = - \frac{i}{d} D^2\Phi \,.
	$$
	Since $D^2=-\Delta$, we conclude that
	\begin{equation}
		\label{eq:harmonic}
		D^2\Phi =0 \,.
	\end{equation}
	Next, we use \eqref{eq:twistoreqdiff} twice to get
	$$
	\gamma_j \partial_k D\Phi = -id \partial_k \partial_j \Phi = -id\partial_j\partial_k\Phi = \gamma_k \partial_j D\Phi \,.
	$$
	Multiplying by $\gamma_j$ and using the anticommutation relations, we deduce
	$$
	\partial_k D\Phi = - \gamma_k \gamma_j \partial_j D\Phi + 2\delta_{j,k} \partial_k D\Phi \,.
	$$
	Summing with respect to $j$ gives
	$$
	d \partial_k D\Phi = -i \gamma_k D^2\Phi + 2 \partial_k D\Phi 
	$$
	The assumption $d\geq 3$ and \eqref{eq:harmonic} imply that
	$$
	\partial_k D\Phi = 0 
	\qquad\text{for all}\ k=1,\ldots,d \,.
	$$
	This implies that there is a $\phi_1\in\C^N$ such that
	$$
	D\Phi = \phi_1 \,.
	$$
	Inserting this information into \eqref{eq:twistoreq} gives
	$$
	-i\partial_j \Phi - \frac{1}{d} \gamma_j \phi_1 = 0
	\qquad\text{for all}\ j=1,\ldots, d \,.
	$$
	Thus, there is a $\phi_0\in\C^N$ such that
	$$
	\Phi(x) = \phi_0 + \frac{i}{d} \gamma\cdot x\,\phi_1 \,.
	$$
	This is the assertion, up to redefining $\phi_1$.
\end{proof}


\section{Characterizing cases of equality. II}\label{sec:equality2}

Our goal in this section is to complete the proof of Theorem \ref{main2} concerning the characterization of extremal solutions of the inequality in Theorem \ref{main}. As a byproduct, we will also prove the claim in Theorem \ref{main} that the inequality there is not attained in even dimensions.

We assume throughout this section that $(\psi,A)$ solves \eqref{eq:eq}, that $0\not\equiv \psi\in L^p(\R^d,\C^N)$ for some $d/(d-1)<p<\infty$ and that $\|A\|_{L^2}^d = (d/(d-2))S_d$.

According to Proposition \ref{equality1} and after translating and dilating $\psi$ and $A$ and multiplying $\psi$ by a constant, we may, without loss of generality, assume that
\begin{equation}
	\label{eq:equalityabs}
	|\psi(x)| = \left( \frac{1}{1+|x|^2} \right)^\frac{d-1}{2}
	\qquad\text{and}\qquad
	|A(x)| = d\ \frac{1}{1+|x|^2} \,.
\end{equation}


\subsection*{Using the twistor equation}

According to Proposition \ref{equality1}, $\psi/|\psi|^{d/(d-1)}$ satisfies the twistor equation \eqref{eq:twistoreq}. Thus, by Theorem \ref{twistor}, there are $\phi_0,\phi_1\in\C^N$ such that
$$
\frac{\psi(x)}{|\psi(x)|^{d/(d-1)}} = \phi_0 + \gamma\cdot x\,\phi_1 
\qquad\text{for all}\ x\in\R^d \,.
$$
Taking absolute values in the latter equation gives
\begin{align*}
	|\psi(x)|^{-\frac1{d-1}} & = |\phi_0 + \gamma\cdot x\, \phi_1| = \left( |\phi_0|^2 + 2\re \langle \phi_0, \gamma\cdot x\,\phi_1 \rangle + |\gamma\cdot x\,\phi_1|^2 \right)^{\frac 12} \\
	& = \left( |\phi_0|^2 + 2\re \langle \phi_0, \gamma\cdot x\,\phi_1\rangle + |\phi_1|^2 |x|^2 \right)^{\frac 12}
\end{align*}
Comparing this with the formula for $|\psi|$ in \eqref{eq:equalityabs} gives
$$
1+ |x|^2 = |\phi_0|^2 + 2\re \langle \phi_0, \gamma\cdot x\,\phi_1 \rangle + |\phi_1|^2 |x|^2 \,,
$$
that is,
\begin{equation}
	\label{eq:parameters}
	|\phi_1|=|\phi_0| =1 \,,
	\qquad
	\re \langle \phi_0, \gamma_j \phi_1 \rangle = 0
	\qquad\text{for all}\ j=1,\ldots,d \,.
\end{equation}
To summarize, we know at the moment that
\begin{equation}
	\label{eq:charpsi}
	\psi(x) = \left( \frac{1}{1+|x|^2} \right)^{\frac d2} (\phi_0 + \gamma\cdot x\,\phi_1)
\end{equation}
with $\phi_0,\phi_1$ satisfying \eqref{eq:parameters}.


\subsection*{Recovering the vector potential}

For $1\leq j,k\leq d$ we compute, using the properties of the $\gamma$ matrices,
$$
\re \langle \psi, \gamma_j\gamma_k\psi \rangle = \frac12\left( \langle \psi, \gamma_j\gamma_k\psi \rangle + \langle \psi, \gamma_k\gamma_j\psi\rangle \right) = \delta_{j,k} |\psi|^2 \,.
$$
Thus,
$$
\re \langle \psi,\gamma_j \gamma\cdot A\,\psi \rangle = \sum_k A_k  \re \langle \psi, \gamma_j\gamma_k\psi \rangle = A_j |\psi|^2 \,.
$$
On the other hand, by \eqref{eq:eq},
$$
\re \langle \psi,\gamma_j \gamma\cdot A\,\psi \rangle = \re \langle \psi,\gamma_j \gamma\cdot(-i\nabla) \psi \rangle
$$
and, therefore,
$$
A_j = \frac{\re \langle \psi,\gamma_j \gamma\cdot(-i\nabla) \psi \rangle}{|\psi|^2} \,.
$$
Using \eqref{eq:charpsi}, we compute
$$
\partial_k \psi = -d  \left( \frac{1}{1+|x|^2} \right)^{\frac{d+2}2} x_k (\phi_0+\gamma\cdot x\phi_1) +  \left( \frac{1}{1+|x|^2} \right)^{\frac d2} \gamma_k\phi_1
$$
and
\begin{align}\label{eq:equalitydirac}
	\gamma\cdot(-i\nabla)\psi(x) & = id  \left( \frac{1}{1+|x|^2} \right)^{\frac{d+2}2} \gamma\cdot x(\phi_0+\gamma\cdot x\phi_1) -i d \left( \frac{1}{1+|x|^2} \right)^{\frac d2} \phi_1 \notag \\
	& = -i d  \left( \frac{1}{1+|x|^2} \right)^{\frac{d+2}2}  (\phi_1 - \gamma\cdot x \phi_0) \,.
\end{align}
Inserting this into the above formula for $A_j$, we find
\begin{align*}
	A_j & = d  \left( \frac{1}{1+|x|^2} \right)^2 \im \langle (\phi_0 + \gamma\cdot x \phi_1), \gamma_j (\phi_1-\gamma\cdot x\phi_0) \rangle \\
	& = d  \left( \frac{1}{1+|x|^2} \right)^2 \im \left( \langle\phi_0,\gamma_j\phi_1 \rangle - \langle\phi_0,\gamma_j \gamma\cdot x\phi_0\rangle \right. \\
	& \qquad\qquad\qquad\qquad\quad \left. + \langle\phi_1,\gamma\cdot x \gamma_j \phi_1 \rangle - \langle \phi_1,\gamma\cdot x \gamma_j \gamma\cdot x \phi_0 \rangle \right).
\end{align*}
This expression can be slightly simplified with the help of the following lemma.

\begin{lemma}\label{commutation}
	For all $x,y\in\R^d$,
	$$
	\gamma\cdot x\ \gamma\cdot y\ \gamma\cdot x = - |x|^2 \gamma\cdot y + 2(x\cdot y) \gamma\cdot x \,.
	$$
\end{lemma}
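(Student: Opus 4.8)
The statement to prove is Lemma~\ref{commutation}: for all $x,y\in\R^d$,
$$
\gamma\cdot x\ \gamma\cdot y\ \gamma\cdot x = - |x|^2 \gamma\cdot y + 2(x\cdot y) \gamma\cdot x \,.
$$

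This is a purely algebraic identity that follows directly from the Clifford relations $\gamma_j\gamma_k+\gamma_k\gamma_j=2\delta_{j,k}$. The plan is as follows.

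\textbf{Approach.} The clean way is to first establish the two-factor reflection identity
$$
\gamma\cdot y\ \gamma\cdot x + \gamma\cdot x\ \gamma\cdot y = 2(x\cdot y)\,\1 \,,
$$
which is immediate by expanding $\sum_{j,k} y_j x_k(\gamma_j\gamma_k+\gamma_k\gamma_j)$ and using the Clifford relations; in particular $(\gamma\cdot x)^2 = |x|^2\,\1$. Then I would write $\gamma\cdot y\ \gamma\cdot x = 2(x\cdot y)\,\1 - \gamma\cdot x\ \gamma\cdot y$ and substitute this into the left-hand side:
$$
\gamma\cdot x\ \gamma\cdot y\ \gamma\cdot x = \gamma\cdot x\,\bigl(2(x\cdot y)\,\1 - \gamma\cdot x\ \gamma\cdot y\bigr) = 2(x\cdot y)\,\gamma\cdot x - (\gamma\cdot x)^2\,\gamma\cdot y \,,
$$
and finally replace $(\gamma\cdot x)^2$ by $|x|^2\,\1$ to arrive at the asserted formula. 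Alternatively one can expand everything in components: $\gamma\cdot x\ \gamma\cdot y\ \gamma\cdot x = \sum_{j,k,l} x_j y_k x_l\,\gamma_j\gamma_k\gamma_l$, split the sum according to whether indices coincide, and repeatedly apply the anticommutation relations to normal-order the products; but the two-step reduction above is shorter and less error-prone.

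\textbf{Main obstacle.} There is essentially no obstacle: this is a one-line consequence of the defining relations of the gamma matrices once the auxiliary identity $(\gamma\cdot x)(\gamma\cdot y)+(\gamma\cdot y)(\gamma\cdot x)=2(x\cdot y)\,\1$ is recorded. The only thing to be slightly careful about is bookkeeping of signs when commuting $\gamma\cdot x$ past $\gamma\cdot y$, and noting that the identity is bilinear-type in the entries so it suffices to verify it on the spanning products $\gamma_j\gamma_k\gamma_l$, which is automatic from the component computation. Hence I would simply present the two-line substitution argument, citing the Clifford relations stated in the introduction.
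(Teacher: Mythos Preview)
Your argument is correct. The two-factor identity $(\gamma\cdot x)(\gamma\cdot y)+(\gamma\cdot y)(\gamma\cdot x)=2(x\cdot y)\,\1$ follows immediately from the Clifford relations, and the substitution you describe yields the lemma in one line.

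The paper takes the other route you mention as an alternative: it reduces by linearity in $y$ to the case $y=e_j$, expands $\gamma\cdot x\,\gamma_j\,\gamma\cdot x=\sum_{k,\ell}x_kx_\ell\,\gamma_k\gamma_j\gamma_\ell$, splits into the diagonal terms $k=\ell$ and the off-diagonal terms $k<\ell$, and then evaluates each piece using the anticommutation relations $\gamma_k\gamma_j\gamma_k=-\gamma_j+2\delta_{k,j}\gamma_k$ and $\gamma_k\gamma_j\gamma_\ell+\gamma_\ell\gamma_j\gamma_k=2(\delta_{k,j}\gamma_\ell+\delta_{\ell,j}\gamma_k)$. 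Your approach is shorter and avoids index bookkeeping; the paper's component computation has the minor advantage of displaying explicitly how the $2x_j\,\gamma\cdot x$ term assembles from the diagonal and off-diagonal contributions, which can be reassuring if one wants to see where each piece originates. Substantively there is no difference: both are straightforward unwindings of the Clifford relations.
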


\begin{proof}
	By linearity, we may assume that $x=e_j$. We write
	\begin{align*}
		\gamma\cdot x \gamma_j \gamma\cdot x = \sum_{k,\ell} x_k x_\ell \gamma_k\gamma_j\gamma_\ell = \sum_k x_k^2 \gamma_k \gamma_j \gamma_k + \sum_{k<\ell} x_k x_\ell \left( \gamma_k \gamma_j\gamma_\ell + \gamma_\ell \gamma_j \gamma_k \right).
	\end{align*}
	By the anticommutation relations, $\gamma_k\gamma_j\gamma_k = - \gamma_j + 2\delta_{k,j} \gamma_k$, so
	$$
	\sum_k x_k^2 \gamma_k \gamma_j \gamma_k = - |x|^2 \gamma_j + 2 x_j^2 \gamma_j \,.
	$$
	Similarly, if $k<\ell$, then $\gamma_k\gamma_j\gamma_\ell +\gamma_\ell\gamma_j\gamma_k = - \gamma_j (\gamma_k\gamma_\ell + \gamma_\ell\gamma_k) + 2\delta_{k,j}\gamma_\ell + 2 \delta_{\ell,j} \gamma_k= 2(\delta_{k,j}\gamma_\ell + \delta_{\ell,j}\gamma_k)$ and so
	$$
	\sum_{k<\ell} x_k x_\ell \left( \gamma_k \gamma_j\gamma_\ell + \gamma_\ell \gamma_j \gamma_k \right) = 2 \sum_{k<\ell} x_k x_\ell \left( \delta_{k,j}\gamma_\ell + \delta_{\ell,j}\gamma_k \right) = 2 \sum_{k\neq j} x_k x_j \gamma_k \,.
	$$
	This proves the claimed formula.
\end{proof}

Inserting the formula from the lemma into the previous equation for $A_j$ gives
\begin{align*}
	A_j = d  \left( \frac{1}{1+|x|^2} \right)^2 & \im \left((1-|x|^2) \langle\phi_0,\gamma_j\phi_1 \rangle + 2 x_j \langle \phi_0, \gamma\cdot x \phi_1 \rangle \right. \\
	& \qquad \left. - \langle\phi_0,\gamma_j \gamma\cdot x\phi_0\rangle - \langle\phi_1,\gamma_j \gamma\cdot x \phi_1 \rangle \right).
\end{align*}
Introducing the vector $w\in\R^d$ by
$$
w_j:= \im\langle\phi_0,\gamma_j\phi_1\rangle
$$
as well as the matrix $M\in\R^{d\times d}$ by
$$
M_{j,k} := - \frac12( \im \langle\phi_0,\gamma_j \gamma_k \phi_0\rangle + \im \langle\phi_1,\gamma_j \gamma_k \phi_1 \rangle) \,,
$$
we can write this as
\begin{equation}
	\label{eq:equalitya}
	A = d  \left( \frac{1}{1+|x|^2} \right)^2 \left((1-|x|^2) w + 2 x (w\cdot x) + 2 Mx \right).
\end{equation}
Note that, by \eqref{eq:parameters}, we have $w_j = -i \langle\phi_0,\gamma_j\phi_1\rangle$. Also, by the anticommutation relations, we see that the matrix $M$ is skew-symmetric, that is,
$$
M^T = - M \,.
$$

\bigskip

Next, we derive equations for the matrix $M$ and the vector $w$. They imply, in particular, that $d$ is odd. Since $|A(x)|= d/(1+|x|^2)$, we must have
$$
1+|x|^2 = | (1-|x|^2) w + 2x (w\cdot x) + 2 Mx |
\qquad\text{for all}\ x\in\R^d \,.
$$
Since
\begin{align*}
	& | (1-|x|^2) w + 2x (w\cdot x) + 2 Mx |^2 \\
	& = (1-|x|^2)^2 |w|^2 + 4 |Mx|^2 + 4 (w\cdot x)^2 + 4 (1-|x|^2) (w\cdot Mx) +8 (x\cdot Mx) (w\cdot x) \,.
\end{align*}
By skew-symmetry, we have $x\cdot Mx =0$. Since the right side above is equal to $(1+|x|^2)^2$, the odd-degree term $(1-|x|^2) (w\cdot Mx)$ must vanish, that is, by skew-symmetry,
$$
Mw = 0 \,.
$$
The remaining equations are
$$
|w|^2 = 1 \,,\qquad
|x|^2 = - |w|^2 |x|^2 + 2 |Mx|^2 + 2 (w\cdot x)^2 \,.
$$
In view of the first equation here, the second one is equivalent to
$$
M^T M + |w\rangle\langle w| = 1 \,.
$$
This implies, in particular, that $\ker M={\rm span}\{ w\}$. Since the dimension of the kernel of a skew-symmetric matrix in even dimension is even dimensional, we conclude that \emph{$d$ is odd}.


\subsection*{Using the zero mode equation}

In what follows we assume that $d$ is odd. It follows from \eqref{eq:charpsi} and \eqref{eq:equalitya} that
$$
\gamma\cdot A \psi = d \left( \frac{1}{1+|x|^2} \right)^{\frac{d+4}2} \left( (1-|x|^2) \gamma\cdot w + 2(w\cdot x) \gamma\cdot x + 2 \gamma\cdot Mx \right) (\phi_0 + \gamma\cdot x\phi_1) \,.
$$
Combining this with \eqref{eq:equalitydirac} and the equation \eqref{eq:eq}, we get
$$
-i (1+|x|^2)  (\phi_1 - \gamma\cdot x \phi_0) =  \left( (1-|x|^2) \gamma\cdot w + 2(w\cdot x) \gamma\cdot x + 2 \gamma\cdot Mx \right) (\phi_0 + \gamma\cdot x\phi_1) \,.
$$
Using Lemma \ref{commutation}, we can rewrite this as
$$
-i (1+|x|^2)  (\phi_1 - \gamma\cdot x \phi_0) =  \left( \gamma\cdot w + \gamma\cdot x\ \gamma\cdot w\ \gamma\cdot x + 2 \gamma\cdot Mx \right) (\phi_0 + \gamma\cdot x\phi_1) \,.
$$
Both sides are polynomials of degree three. For us, only the equation that is obtained for homogeneity one is interesting, namely
\begin{align}
	\label{eq:equality2}
	i\gamma\cdot x\ \phi_0 & = \gamma\cdot w\ \gamma\cdot x\ \phi_1 + 2\gamma\cdot Mx\ \phi_0
	\qquad\text{for all}\ x\in\R^d \,.
\end{align}
(In fact, one can show that this equation is equivalent to the one corresponding to homogeneity two and that those corresponding to homogeneities zero and three are consequences of the above equation.) From \eqref{eq:equality2} we derive
\begin{align}
	\label{eq:equality1}
	-i\phi_1 & = \gamma\cdot w\ \phi_0 \,,\\
	\label{eq:equality21}
	\gamma\cdot My\ \phi_0 & = i\gamma\cdot y\ \phi_0
	\qquad\text{for all}\ y\in w^\bot \,.
\end{align}
Indeed, \eqref{eq:equality1} follows by taking $x=w$ in \eqref{eq:equality2} and recalling that $|w|=1$ and $Mw=0$. Let us prove \eqref{eq:equality21}. It follows from the properties of the gamma matrices that
\begin{equation*}
	\gamma\cdot w\ \gamma\cdot x= - \gamma\cdot x\ \gamma\cdot w + 2(w\cdot x) \,.
\end{equation*}
Inserting this into \eqref{eq:equality2} and using \eqref{eq:equality1}, we obtain
$$
2i \gamma\cdot x\ \phi_0 = 2(w\cdot x)\phi_1 + 2 \gamma\cdot Mx\ \phi_0
\qquad\text{for all}\ x\in\R^d \,.
$$
Specializing to $x$ orthogonal to $w$ yields \eqref{eq:equality21}.

\bigskip

After these preparations we are in position to complete the proof of our second main result.

\begin{proof}[Proof of Theorem \ref{main2}]
Recall the definition of the matrix $\Sigma$ before Theorem \ref{main2}. Since $M$ is skew-symmetric and satisfies $M^TM+|w\rangle\langle w|=1$, there is an $O\in \mathcal O(d)$ such that
$$
O^T M O = \Sigma \,.
$$
We note that $MOe_1 = O \Sigma e_1 =0$. Since $M^TM+|w\rangle\langle w|=1$, this implies that $Oe_1=w$. Thus, we can rewrite \eqref{eq:equalitya} as
$$
A(x) = d \left( \frac{1}{1+|x|^2} \right)^2 \left( (1-|x|^2) O e_1 + 2x(e_1\cdot O^{-1}x) + O\Sigma O^{-1} x \right) = O \mathcal A(O^{-1}x) \,.
$$
Thus, $A$ is of the form claimed in the theorem.

Next, given the matrix $O\in\mathcal O(d)$, there is a $U\in\mathcal U(N)$ such that \eqref{eq:ou} holds; see \cite[Corollary A.2]{FrLo1}. We now show that $U\phi_0$ is a vaccuum, that is, it satisfies
\begin{equation}
	\label{eq:vaccuum2}
	\frac12\left( \gamma_{2\alpha} + i \gamma_{2\alpha+1}\right) U\phi_0 = 0
	\qquad\text{for all}\ \alpha = 1,\ldots,\frac{d-1}{2} \,.
\end{equation}
Indeed, since $\Sigma e_{2\alpha+1} = -e_{2\alpha}$, we have
$$
U^* \gamma_{2\alpha} U = \gamma\cdot O e_{2\alpha} = - \gamma\cdot O\Sigma e_{2\alpha+1} = - \gamma\cdot MOe_{2\alpha+1} \,,
$$
so, using \eqref{eq:equality21},
$$
U^*\gamma_{2\alpha} U\phi_0 = -\gamma\cdot MO e_{2\alpha+1}\phi_0 = -i \gamma\cdot Oe_{2\alpha+1}\phi_0 = -i U^* \gamma_{2\alpha+1} U\phi_0 \,.
$$
This proves \eqref{eq:vaccuum2}.

Next, we note that
$$
\frac12\left( \gamma_{2\alpha} + i \gamma_{2\alpha+1}\right) \gamma_1 U\phi_0 = 0
\qquad\text{for all}\ \alpha = 1,\ldots,\frac{d-1}{2} \,.
$$
Indeed, this follows immediately from \eqref{eq:vaccuum2}, since $\gamma_1$ anticommutes with $\gamma_{2\alpha}$ and $\gamma_{2\alpha+1}$ for $\alpha\geq 1$.

Thus, we have shown that both $U\phi_0$ and $\gamma_1 U\phi_0$ are vaccua. By the uniqueness of the vaccuum \cite[Lemma A.5]{FrLo1}, there is a $\lambda\in\C$ such that $\gamma_1U\phi_0 = \lambda U\phi_0$. Since $|\gamma_1 U\phi_0|=|U \phi_0|$, we have $|\lambda|=1$ and, since $\gamma_1$ is Hermitian, we have $\lambda\in\R$. Thus, $s:=\lambda\in\{+1,-1\}$ and $\gamma_1 U\phi_0 = s U\phi_0$.

The equality $Oe_1=w$ implies $U^*\gamma_1 U = \gamma\cdot O e_1 = \gamma\cdot w$. Thus, by \eqref{eq:equality1}, $U^*\gamma_1 U\phi_0 = \gamma\cdot w\phi_0 = -i\phi_1$. We conclude that
$$
\phi_1 = i U^*\gamma U\phi_0 = i s\phi_0 \,.
$$

Since $|U\phi_0|=|\phi_0|=1$, by uniqueness of the vaccuum (see \cite[Lemma A.5]{FrLo1}) we may assume that $U\phi_0 = \Psi_0$. Note that above, we showed that $\gamma_1 \Psi_0 = \gamma_1 U\phi_0 = s U\phi_0 = s\Psi_0$, which justifies the notation $s$. Moreover, we can rewrite \eqref{eq:charpsi} as
$$
\psi(x) = \left( \frac{1}{1+|x|^2} \right)^{\frac d2} (1+is\gamma\cdot x)\phi_0 = \left( \frac{1}{1+|x|^2} \right)^{\frac d2} U^* (1+is U \gamma\cdot x U^*) \Psi_0 = U^*\Psi(O^{-1}x) \,.
$$
Here in the last equality we used \eqref{eq:ou}. Thus, $\psi$ is of the form claimed in the theorem. This completes the proof.
\end{proof}


\appendix

\section{Characterizing cases of equality in another inequality}

In this appendix, we consider the equation
\begin{equation}
	\label{eq:eqscalarapp}
	\gamma\cdot(-i\nabla)\psi = \lambda\,\psi
\end{equation}
with a real function $\lambda\in L^d(\R^d)$. In \cite{FrLo1}, we proved that, if $\psi\in L^p(\R^d,\C^N)$ for some $d/(d-1)<p<\infty$ is a nontrivial solution of \eqref{eq:eqscalarapp}, then
\begin{equation*}
	\|\lambda\|_{L^d}^2 \geq \frac{d}{d-2}\ S_d \,.
\end{equation*}
(Note that in \cite{FrLo1} we used a slightly different normalization.) A simply computation shows that equality is attained for the pair $(\tilde\Psi,\Lambda)$, where
$$
\tilde\Psi(x) := \left( \frac{1}{1+|x|^2} \right)^\frac d2 \left( 1 + is\gamma\cdot x\right) \phi_0 \,,
\qquad
\Lambda(x) := sd\ \frac{1}{1+|x|^2} \,.
$$
Here $\phi_0\in\C^N$ is a constant spinor and $s\in\{+1,-1\}$. Note that, in contrast to the situation of Theorem \ref{main2}, the constant spinor $\phi_0$ is \emph{not} required to satisfy the vaccuum conditions \eqref{eq:vaccuum} and $s$ is not coupled to $\phi_0$. The following theorem shows that, up to translations, dilations and multiplications by constants, this family constitutes the only pairs for which equality is attained.

\begin{theorem}\label{main3}
	Let $d\geq 3$. If $\psi\in L^p(\R^d,\C^N)$ for some $d/(d-1)<p<\infty$ is a nontrivial solution of \eqref{eq:eqscalarapp} with
	$$
	\|\lambda\|_{L^d}^2 = \frac{d}{d-2}\ S_d \,,
	$$
	then there are $a\in\R^d$, $b>0,c>0$, as well as a $\phi_0\in\C^N$ with $|\phi_0|=1$ and an $s\in\{+1,-1\}$ such that, for all $x\in\R^d$,
	$$
	\psi(x) = c\, \tilde\Psi((x-a)/b) 
	\qquad\text{and}\qquad
	\lambda(x) = b^{-1}\, \Lambda((x-a)/b) \,.
	$$
\end{theorem}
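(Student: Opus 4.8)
The strategy is to follow the proof of Theorem~\ref{main2}, replacing the vector potential $A$ everywhere by the scalar function $\lambda$; because $\lambda$ is scalar, the concluding step becomes markedly simpler and, in particular, no parity restriction on $d$ appears. First I would establish the scalar counterpart of Proposition~\ref{equality1}. The equality analysis of Section~\ref{sec:equality1}, together with the identity of Proposition~\ref{identity} behind it, uses equation \eqref{eq:eq} only through the pointwise relation $|\gamma\cdot\nabla\psi|^2 = |A|^2|\psi|^2$; for \eqref{eq:eqscalarapp} one has instead $|\gamma\cdot\nabla\psi|^2 = \lambda^2|\psi|^2$, and since $\lambda\in L^d$ and $\psi\in L^{2d/(d-2)}$ by \cite{FrLo1}, H\"older's inequality still gives $\gamma\cdot(-i\nabla)\psi\in L^2$, hence $\psi\in\dot H^1(\R^d)$ and Proposition~\ref{identity} applies. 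Rerunning the argument of Section~\ref{sec:equality1} with $|A|$ replaced by $|\lambda|$, the hypothesis $\|\lambda\|_{L^d}^2 = (d/(d-2))S_d$ forces, after a translation, a dilation, and a multiplication of $\psi$ by a constant, that
$$
|\psi(x)| = \left(\frac1{1+|x|^2}\right)^{\frac{d-1}2},
\qquad
|\lambda(x)| = \frac d{1+|x|^2},
$$
and it forces $\psi/|\psi|^{d/(d-1)}$ to solve the twistor equation \eqref{eq:twistoreq}. By Theorem~\ref{twistor} there are constant spinors $\phi_0,\phi_1$ with $\psi(x) = (1+|x|^2)^{-d/2}(\phi_0+\gamma\cdot x\,\phi_1)$; taking absolute values and comparing with the formula for $|\psi|$ yields $|\phi_0|=|\phi_1|=1$ and $\re\langle\phi_0,\gamma_j\phi_1\rangle=0$ for all $j$, exactly as in \eqref{eq:parameters}, and in particular $|\psi|^2 = (1+|x|^2)^{-(d-1)}$.

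Next I would recover $\lambda$. Since $|\psi|$ never vanishes and $\lambda$ is real, $\lambda = \langle\psi,\gamma\cdot(-i\nabla)\psi\rangle/|\psi|^2$, and the formula \eqref{eq:equalitydirac} — which is unchanged here — gives $\gamma\cdot(-i\nabla)\psi(x) = -id(1+|x|^2)^{-(d+2)/2}(\phi_1-\gamma\cdot x\,\phi_0)$, so that
$$
\lambda(x) = -id\left(\frac1{1+|x|^2}\right)^{2}\big\langle\phi_0+\gamma\cdot x\,\phi_1,\ \phi_1-\gamma\cdot x\,\phi_0\big\rangle.
$$
Expanding the bracket using $(\gamma\cdot x)^2 = |x|^2$ and the fact that $\langle\phi_i,\gamma\cdot x\,\phi_i\rangle$ is real, one sees that its real part equals $(1-|x|^2)\re\langle\phi_0,\phi_1\rangle + \langle\phi_1,\gamma\cdot x\,\phi_1\rangle - \langle\phi_0,\gamma\cdot x\,\phi_0\rangle$; since $\lambda$ is real this must vanish identically in $x$, whence $\re\langle\phi_0,\phi_1\rangle = 0$ (the linear part merely gives $\langle\phi_0,\gamma_j\phi_0\rangle = \langle\phi_1,\gamma_j\phi_1\rangle$, which we will not use). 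The bracket is then purely imaginary, equal to $i(1+|x|^2)\im\langle\phi_0,\phi_1\rangle$, so $\lambda(x) = d(1+|x|^2)^{-1}\im\langle\phi_0,\phi_1\rangle$. Comparing with $|\lambda(x)| = d(1+|x|^2)^{-1}$ gives $s := \im\langle\phi_0,\phi_1\rangle \in\{+1,-1\}$ and hence $\lambda(x) = sd(1+|x|^2)^{-1} = \Lambda(x)$. Finally, $\langle\phi_0,\phi_1\rangle = is$ together with $|\phi_0|=|\phi_1|=1$ is exactly the equality case of the Cauchy--Schwarz inequality, so $\phi_1 = is\phi_0$, and therefore $\psi(x) = (1+|x|^2)^{-d/2}(1+is\gamma\cdot x)\phi_0 = \tilde\Psi(x)$. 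Undoing the normalization of the first step produces the parameters $a$, $b$, $c$ and completes the proof.

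The bulk of the work is in the first step, namely checking that the monotone-convergence and weak-differentiability arguments of Section~\ref{sec:equality1} survive the replacement of $|A|$ by $|\lambda|$, i.e., establishing the scalar analog of Proposition~\ref{equality1}. This is essentially bookkeeping, since equation \eqref{eq:eq} entered there only through $|\gamma\cdot\nabla\psi|^2 = |A|^2|\psi|^2$. By contrast, the part that genuinely differs from the proof of Theorem~\ref{main2} — recovering the coupling — is strictly easier here: no skew-symmetric matrix appears, so there is nothing forcing $d$ to be odd, and $\phi_0$ is not constrained by the vacuum conditions \eqref{eq:vaccuum}. This is the structural reason that equality persists in every dimension $d\ge3$.
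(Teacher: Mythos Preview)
Your proposal is correct and follows essentially the same route as the paper: both reduce to Proposition~\ref{equality1} (with $|A|$ replaced by $|\lambda|$), invoke Theorem~\ref{twistor} to obtain \eqref{eq:charpsi} and \eqref{eq:parameters}, and then extract $s=\im\langle\phi_0,\phi_1\rangle\in\{\pm1\}$ and $\phi_1=is\phi_0$ from the equation. The only cosmetic difference is in the last step: the paper substitutes \eqref{eq:equalitydirac} directly into the spinor equation \eqref{eq:eqscalarapp}, takes the real part of the inner product with $\phi_0$ to see that $\lambda/|\lambda|$ is the constant $s$, and then evaluates at $x=0$ to read off $\phi_1=is\phi_0$; you instead form the scalar $\lambda=\langle\psi,\gamma\cdot(-i\nabla)\psi\rangle/|\psi|^2$ and close with Cauchy--Schwarz, which is a slightly longer but equally valid path.
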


\begin{proof}
	We argue as in the proof of Theorem \ref{main2}. In the same way as in Proposition \ref{equality1} we deduce that, after translating and dilating $(\psi,\lambda)$ and multiplying $\psi$ by a constant,
	$$
	|\psi(x)| = \left( \frac{1}{1+|x|^2} \right)^\frac{d-1}{2}
	\qquad\text{and}\qquad
	|\lambda(x)| = d\ \frac{1}{1+|x|^2} \,.
	$$
	Moreover, we obtain equations \eqref{eq:twistoreqdiff}, which, according to Theorem \ref{twistor}, implies the form \eqref{eq:charpsi} of $\psi$ with $\phi_0,\phi_1\in\C^N$ satisfying \eqref{eq:parameters}. Thus $\gamma\cdot(-i\nabla)\psi$ is given by \eqref{eq:equalitydirac}, and inserting this into \eqref{eq:eqscalarapp}, we find
	\begin{equation}
		\label{eq:equalityapp}
		-i\left( \phi_1 -\gamma\cdot x\ \phi_0 \right) = \frac{\lambda(x)}{|\lambda(x)|} \left( \phi_0 + \gamma\cdot x\ \phi_1 \right)
		\qquad\text{for all}\ x\in\R^d \,.
	\end{equation}
	Taking the real part of the inner product of this equation with $\phi_0$ and recalling \eqref{eq:parameters}, we find that
	$$
	s:= \im\langle\phi_0,\phi_1 \rangle = \frac{\lambda(x)}{|\lambda(x)|} 
	\qquad\text{for all} x\in\R^d \,.
	$$
	This shows that the sign of $\lambda$ is constant. Returning with this information to \eqref{eq:equalityapp} and evaluating at $x=0$, we infer that $-i\phi_1= s\phi_0$. This leads to the claimed form of $\psi$ and $\lambda$ and completes the proof.
\end{proof}


\bibliographystyle{amsalpha}

\begin{thebibliography}{26}

\bibitem{Au} T.~Aubin, \textit{Probl\`emes isoperim\'etriques et espaces de Sobolev}. J. Differ. Geometry \textbf{11} (1976), 573--598.

\bibitem{Be} W.\ Beckner, \textit{Sharp Sobolev inequalities on the sphere and the Moser--Trudinger inequality}. Ann.\ of Math.\ (2) \textbf{138} (1993), no.\ 1, 213--242.

\bibitem{BrFoMo} T.\ P.\ Branson, L.\ Fontana, C.\ Morpurgo, \textit{Moser--Trudinger and Beckner--Onofri's inequalities on the CR sphere}. Ann.\ of Math.\ (2) \textbf{177} (2013), no.\ 1, 1--52. 

\bibitem{CaLo} E.\ Carlen, M.\ Loss, \textit{Competing symmetries, the logarithmic HLS inequality and Onofri's inequality on $\Sph^n$}. Geom.\ Funct.\ Anal.\ \textbf{2} (1992), no.\ 1, 90--104.

\bibitem{CaCaLo} E.\ A.\ Carlen, J.\ A.\ Carrillo, M. Loss, \textit{Hardy--Littlewood--Sobolev inequalities via fast diffusion flows}. Proc.\ Natl.\ Acad.\ Sci.\ USA \textbf{107} (2010), no.\ 46, 19696--19701.

\bibitem{CENaVi} D.\ Cordero-Erausquin, B.\ Nazaret, C.\ Villani, \textit{A mass-transportation approach to sharp Sobolev and Gagliardo-Nirenberg inequalities}. Adv.\ Math.\ \textbf{182} (2004), no.\ 2, 307--332. 

\bibitem{DG} E.\ De Giorgi, \textit{Sulla propriet\`a isoperimetrica dell’ipersfera, nella classe degli insiemi aventi frontiera orientata di misura finita}. Atti Accad.\ Naz.\ Lincei Mem.\ Cl.\ Sci.\ Fis.\ Mat.\ Nat., Sez.\ I, \textbf{8} (1958), 33--44.

\bibitem{DoEsLo} J.\ Dolbeault, M.\ J.\ Esteban, M.\ Loss, \textit{Rigidity versus symmetry breaking via nonlinear flows on cylinders and Euclidean spaces}. Invent.\ Math.\ \textbf{206} (2016), no.\ 2, 397--440.

\bibitem{DuMi}
Gerald~V. Dunne and Hyunsoo Min.
\newblock Abelian zero modes in odd dimensions.
\newblock {\em Phys. Rev. D}, 78(6):067701, 4, 2008.	

\bibitem{Fe} P.\ M.\ N.\ Feehan, \textit{A Kato--Yau inequality and decay estimate for eigenspinors}. J.\ Geom.\ Anal.\ \textbf{11} (2001), no.\ 3, 469--489.

\bibitem{FrLi0} R.\ L.\ Frank, E.\ H.\ Lieb, \textit{Inversion positivity and the sharp Hardy-Littlewood-Sobolev inequality}. Calc.\ Var.\ Partial Differential Equations \textbf{39} (2010), no.\ 1-2, 85--99. 

\bibitem{FrLi} R.\ L.\ Frank, E.\ H.\ Lieb, \textit{Sharp constants in several inequalities on the Heisenberg group}. Ann.\ of Math.\ (2) \textbf{176} (2012), no.\ 1, 349--381.

\bibitem{FrLo1} R.\ L.\ Frank, M.\ Loss, \textit{Which magnetic fields support a zero mode?}. J.\ Reine Ang.\ Math., to appear. arXiv:2012.13646.

\bibitem{FrLo2} R.\ L.\ Frank, M.\ Loss, \textit{Existence of optimizers in a Sobolev inequality for vector fields}. Preprint (2021), arXiv:2107.06450.
	
\bibitem{Fr} T.\ Friedrich, \textit{On the conformal relation between twistors and Killing spinors}. Proceedings of the Winter School on Geometry and Physics ({S}rn\'{\i}, 1989). Rend.\ Circ.\ Mat.\ Palermo (2) Suppl.\ No.\ \textbf{22} (1990), 59--75.	

\bibitem{HeMo} M.\ Herzlich, A.\ Moroianu, \textit{Generalized Killing spinors and conformal eigenvalue estimates for Spinc manifolds}. Ann.\ Global Anal.\ Geom.\ \textbf{17} (1999), no.\ 4, 341--370. 

\bibitem{JeLe} D.\ Jerison, J.\ M.\ Lee, \textit{Extremals for the Sobolev inequality on the Heisenberg group and the CR Yamabe problem}. J.\ Amer.\ Math.\ Soc.\ \textbf{1} (1988), no. 1, 1--13.

\bibitem{LiLo} E.\ H.\ Lieb, M.\ Loss, \textit{Analysis}. Second edition. Graduate Studies in Mathematics, \textbf{14}. American Mathematical Society, Providence, RI, 2001.

\bibitem{Li} A.\ Lichnerowicz, \textit{Spineurs harmoniques}. C.\ R.\ Acad.\ Sci.\ Paris \textbf{257} (1963), 7--9.

\bibitem{Lie} E.\ H.\ Lieb, \textit{Sharp constants in the Hardy--Littlewood--Sobolev and related inequalities}. Ann.\ of Math.\ (2) \textbf{118} (1983), no.\ 2, 349--374.

\bibitem{LoYa} M.\ Loss, H.-T.\ Yau, \textit{Stability of Coulomb systems with magnetic fields. III. Zero energy bound states of the Pauli operator}. Comm.\ Math.\ Phys.\ \textbf{104} (1986), no.\ 2, 283--290. 

\bibitem{Mi} H.\ Min, \textit{Fermion zero modes in odd dimensions}. J.\ Phys.\ A: Math.\ Theor.\ \textbf{43} (2010), 095402.

\bibitem{Rod} E. Rodemich, \textit{The Sobolev inequality with best possible constant}. Analysis Seminar Caltech, Spring 1966.

\bibitem{Ro} G. Rosen, \textit{Minimum value for $c$ in the Sobolev inequality $\|\phi^3\|\leq c\|\nabla\phi\|^3$}. SIAM J. Appl. Math. \textbf{21} (1971), 30--32.

\bibitem{Sc} E.\ Schr\"odinger, \textit{Diracsches Elektron im Schwerefeld}. Sitzungsber.\ Preu\ss.\ Akad.\ Wiss.\, Phys.-Math.\ Kl.\ (1932), no. 11-12, 105--128.

\bibitem{Ta} G. Talenti, \textit{Best constants in Sobolev inequality}. Ann. Mat. Pura Appl. \textbf{110} (1976), 353--372.	
	
\end{thebibliography}

\end{document}